\pdfoutput=1
\documentclass[journal]{IEEEtran}

\usepackage[T1]{fontenc}
\usepackage[utf8]{inputenc}
\usepackage{cite}
\usepackage{amsmath,amssymb}
\usepackage{graphicx}
\usepackage{xurl}
\usepackage[hidelinks]{hyperref}
\usepackage{orcidlink}
\newtheorem{assumption}{Assumption}
\newtheorem{proposition}{Proposition}

\begin{document}

\title{PQLN: Post-Quantum Security for the Bitcoin Lightning Network's Off-Chain Surfaces}

\author{Ahmet~Kurt\,\orcidlink{0000-0002-7175-1739}, Abdul-Salem~Beibitkhan\,\orcidlink{0009-0002-1799-2213}, 
Yacoub Hanna\,\orcidlink{0000-0002-8954-9336},
and~Abdullah~Aydeger\,\orcidlink{0000-0003-3333-1941}%
\thanks{\emph{(Corresponding author: Ahmet Kurt.)}}%
\thanks{Ahmet Kurt is with East Texas A\&M University, RELLIS Campus, Bryan, TX 77807 USA (e-mail: ahmet.kurt@etamu.edu).}%
\thanks{Abdul-Salem Beibitkhan is with North American University, Stafford, TX 77477 USA (e-mail: abyeibitkhan@na.edu).}%
\thanks{Yacoub Hanna is with Florida Gulf Coast University, Fort Myers, FL 33965 USA (e-mail: yhanna@fgcu.edu).}%
\thanks{Abdullah Aydeger is with the Florida Institute of Technology, Melbourne, FL 32901 USA (e-mail: aaydeger@fit.edu).}%
}

\markboth{}%
{Kurt \MakeLowercase{\textit{et al.}}: PQLN: Post-Quantum Security for the Bitcoin Lightning Network's Off-Chain Surfaces}

\maketitle

\begin{abstract}
A cryptographically relevant quantum computer will break the elliptic-curve cryptography behind Bitcoin and its Lightning Network, the most widely used payment channel network. Even a post-quantum consensus upgrade of Bitcoin would not cover Lightning's off-chain surfaces, so its gossip, peer transport, invoices, payment onions, and offers need separate protection. An adversary can already record Lightning's encrypted traffic today and decrypt it once such a computer exists. Lightning can therefore move to post-quantum cryptography now, without waiting for Bitcoin, and stop such harvest-now-decrypt-later attacks along with node impersonation, invoice forgery, and payment deanonymization. In this paper, we propose PQLN, a hybrid post-quantum extension of Lightning that protects all of these surfaces with the lattice-based standards ML-DSA and ML-KEM. PQLN distributes post-quantum node identities through Lightning's gossip, hybridizes the transport handshake, adds post-quantum signatures to invoices, commits post-quantum keys in offers, and makes payment onions hybrid. Since post-quantum material is much larger than its elliptic-curve counterpart, we introduce techniques that fit it into Lightning's existing message formats and size limits. We analyze the security of PQLN against a quantum adversary and implement it in rust-lightning, a major Lightning implementation. Our evaluation with real Lightning nodes shows that PQLN nodes interoperate with unmodified nodes. The added cryptographic operations take at most 0.33~milliseconds, and the main cost is communication, since gossip data grows about tenfold with ML-DSA and about fourfold with the smaller Falcon. To our knowledge, PQLN is the first post-quantum design, implementation, and evaluation for Lightning.
\end{abstract}

\begin{IEEEkeywords}
Lightning Network, Bitcoin, post-quantum cryptography, ML-DSA, ML-KEM, payment channel networks.
\end{IEEEkeywords}

\section{Introduction}
\label{sec:introduction}

\IEEEPARstart{P}{ublic-key} cryptography in use today rests on the hardness of integer factorization and discrete logarithms, and Shor's algorithm solves both problems in polynomial time on a sufficiently large quantum computer~\cite{shor1997}. Recent advances have narrowed the gap between theory and practice, since Google's Willow processor demonstrated a logical qubit whose error rate decreases as more physical qubits are added~\cite{willow2025}, while estimates of the resources required to break 256-bit elliptic-curve keys continue to fall~\cite{babbush2026, chevignard2026}. Moreover, adversaries can record encrypted traffic today and decrypt it once a cryptographically relevant quantum computer exists, a strategy known as \textit{harvest-now-decrypt-later}~\cite{mosca2018, nistir8547}. In response, the National Institute of Standards and Technology (NIST) published its first post-quantum (PQ) cryptography standards in August 2024, including the Module-Lattice-Based Key-Encapsulation Mechanism (ML-KEM) and the Module-Lattice-Based Digital Signature Algorithm (ML-DSA)~\cite{fips203, fips204}. NIST also plans to disallow elliptic-curve signatures and key exchange after 2035~\cite{nistir8547}.

These concerns are especially pressing for cryptocurrencies since they rely on elliptic-curve cryptography to protect high-value digital assets. Bitcoin~\cite{nakamoto2008} is the most consequential case because it accounts for more than half of the market capitalization of all cryptocurrencies.\footnote{\url{https://coinmarketcap.com/charts/bitcoin-dominance/}} Bitcoin authorizes transactions with secp256k1 signatures, and a quantum adversary that observes a public key could use Shor's algorithm to recover the corresponding private key~\cite{aggarwal2018}. Indeed, around 6.9 million BTC, more than a third of the circulating supply, reside in outputs with exposed public keys~\cite{babbush2026}. However, addressing this vulnerability at the Bitcoin base layer is difficult because replacing the underlying cryptography requires a consensus upgrade, and the current PQ proposals remain at an early stage~\cite{bip360}.

The Lightning Network (LN)~\cite{poon2016} is a second-layer protocol that turns Bitcoin into an instant payment system by moving payments into off-chain channels. At the time of writing, LN is the most widely used payment channel network with around 16,000 nodes and 33,000 public channels holding roughly 3750 BTC.\footnote{\url{https://mempool.space/lightning}} LN also entered mainstream use when Coinbase enabled Lightning transfers in 2024.\footnote{\url{https://www.coinbase.com/blog/coinbase-integrates-bitcoins-lightning-network-in-partnership-with}} However, LN's own protocol stack is as quantum-exposed as the base layer. Nodes authenticate gossip with the Elliptic Curve Digital Signature Algorithm (ECDSA), derive the keys of every peer connection from Elliptic Curve Diffie-Hellman (ECDH), sign payment requests with ECDSA and Schnorr signatures, and build payment onions from one ECDH secret per hop. All of these operations rest on the hardness of the elliptic-curve discrete logarithm problem, which Shor's algorithm solves. A quantum adversary can therefore impersonate nodes, decrypt recorded transport sessions, forge payment requests, and deanonymize payments. Despite this exposure, LN has no PQ defense to date, and the only prior examination is a forum analysis that stops short of a design~\cite{roasbeef2026pq}.

However, only part of Lightning's exposure is rooted in Bitcoin. The keys and signatures behind channel funds live inside Bitcoin transactions, so they cannot change without a fork. Everything else lives purely off-chain in messages between Lightning nodes, so these off-chain surfaces can adopt PQ cryptography through a software update alone. They also stay exposed after Bitcoin adopts PQ outputs, because a base-layer upgrade does not cover them. In other words, Lightning has to protect these surfaces itself, and it can do so before Bitcoin completes its own transition.

In this paper, we propose PQLN, a PQ extension of Lightning that protects its off-chain surfaces. PQLN follows a hybrid conservative-extension approach, meaning that every classical mechanism stays in place with ML-DSA and ML-KEM added alongside it, so Lightning's security remains intact. However, hybridizing the primitives is only the starting point, because their deployment faces four obstacles. First, Lightning has no certificate infrastructure to distribute PQ keys. PQLN therefore turns LN's gossip into a key distribution channel, where every node announces its PQ keys and others pin them on first sight. For payees absent from gossip, a reusable payment offer additionally commits a fresh signing key. Second, a quantum adversary can forge any negotiation message, so we run the hybrid transport handshake on a dedicated port without in-band negotiation and check for downgrades against pinned or offer-committed keys. Third, PQ keys, signatures, and ciphertexts are much larger than their classical counterparts and do not fit into Lightning's messages. For example, a 2420-byte ML-DSA-44 signature has to fit into invoice fields of at most 639 bytes each. Similarly, one 1088-byte ML-KEM-768 ciphertext would nearly fill the 1300-byte onion payload, and a payment needs one per hop. PQLN therefore splits oversized signatures into chunks and carries the ciphertexts in a fixed-size list beside the unchanged onion. Finally, the network cannot upgrade at once, so PQLN nodes have to coexist with unmodified nodes. The Lightning protocol lets unmodified nodes skip optional fields, so PQLN carries these additions in such fields. PQLN nodes can thus join today's network without a coordinated upgrade.

Our evaluation with real Lightning nodes shows that 1) the added cryptographic operations take at most 0.33~ms each and a PQ payment adds less than 60~ms per hop on a typical Internet link; 2) the main cost is communication, since a PQLN node downloads about 10 times as much gossip data as an unmodified node, or about 4 times as much with FN-DSA, the smaller Falcon-based signature scheme under standardization at NIST~\cite{falcon2020}; and 3) upgraded and unmodified nodes interoperate across the main Lightning operations, from channel opening to invoice, offer, and multi-hop payments.

Our contributions in this work are as follows:
\begin{itemize}
\item We propose PQLN, a PQ design for Lightning's off-chain surfaces that requires no change to Bitcoin and interoperates with unmodified nodes.
\item We turn LN's gossip into quantum-safe key distribution with trust-on-first-use pinning, and resist downgrade attacks by relying on pinned or offer-committed keys rather than forgeable negotiation messages.
\item We introduce signature chunking for the 639-byte invoice fields and a fixed-size ciphertext list beside the 1300-byte payment onion, which fit the large PQ material into message formats with hard size limits.
\item We analyze the security of PQLN against a quantum adversary and reduce each guarantee to the standard security notions of ML-DSA and ML-KEM.
\item We implement PQLN by modifying the source code of rust-lightning~\cite{ldk} and evaluate its computational and communication overhead, payment latency, and interoperability with real Lightning nodes, at every NIST parameter set and with FN-DSA in place of ML-DSA. Our code is publicly available at our GitHub repository.\footnote{\url{https://github.com/ahmet-kurt/pq-rust-lightning}}
\end{itemize}

The rest of the paper is organized as follows. Section~\ref{sec:related_work} reviews the related work. Section~\ref{sec:preliminaries} provides background on LN and the PQ standards. Section~\ref{sec:pqln_design} presents the design of PQLN and its threat model. Section~\ref{sec:security_analysis} analyzes its security. Section~\ref{sec:evaluation} presents our evaluation. Section~\ref{sec:discussion_and_limitations} discusses the limitations and Section~\ref{sec:conclusion} concludes the paper.

\section{Related Work}
\label{sec:related_work}

We review the efforts closest to PQLN and show why none of them closes this gap. Table~\ref{tab:related} summarizes the comparison, where the LN column denotes protecting Lightning's off-chain surfaces and the last column denotes running alongside unmodified nodes. Research on quantum attacks against Bitcoin has concentrated on the base layer~\cite{chevignard2026, babbush2026}, and so have the proposed mitigations. Stewart et al.~\cite{stewart2018} introduced a commit-delay-reveal protocol that migrates funds to a quantum-resistant scheme even after ECDSA is compromised. Bitcoin Improvement Proposal (BIP) 360 defines a Taproot-like output type that removes the quantum-vulnerable key-path spend~\cite{bip360}, and Kudinov and Nick~\cite{kudinov2025} tailored hash-based signatures to Bitcoin with much smaller sizes than the standardized Stateless Hash-Based Digital Signature Algorithm (SLH-DSA). Levy~\cite{qsb2026} proposed quantum-safe transactions that need no consensus change, though the scheme does not yet cover Lightning channels. All of these efforts stop at the base layer, and none attempt a Lightning design.

\begin{table}[!ht]
\caption{Comparison of PQLN with the Closest Related Work (\checkmark{} Yes, $\sim$ Partial, $\times$ No, n/a Not Applicable)}
\label{tab:related}
\centering
\scriptsize
\setlength{\tabcolsep}{3pt}
\begin{tabular}{@{}l c c c c@{}}
\hline
\textbf{Related work} & \textbf{LN} & \textbf{Design} & \textbf{Impl.} & \textbf{Interop.} \\
\hline
Bitcoin base-layer migration~\cite{bip360, stewart2018, kudinov2025, qsb2026} & $\times$ & \checkmark & $\sim$ & n/a \\
PQ transport~\cite{tlshybrid, kemtls2020, pqxdh, pqnoise2022} & $\times$ & \checkmark & \checkmark & n/a \\
PQ onion routing~\cite{ghoshkate2015, outfox2025} & $\times$ & \checkmark & $\sim$ & n/a \\
PQ adaptor signatures~\cite{esgin2020, tairi2021} & $\times$ & \checkmark & $\sim$ & $\times$ \\
Lightning PQ analysis~\cite{roasbeef2026pq} & $\sim$ & $\times$ & $\times$ & n/a \\
\textbf{PQLN (this work)} & \checkmark & \checkmark & \checkmark & \checkmark \\
\hline
\end{tabular}
\end{table}

Deployed protocols are further along in the same transition. TLS 1.3 negotiates hybrid groups that combine ML-KEM with an elliptic-curve exchange~\cite{tlshybrid}, and KEMs can even replace the handshake signatures~\cite{kemtls2020}. OpenSSH has defaulted to a hybrid PQ key exchange since version 9.0\footnote{\url{https://www.openssh.org/pq.html}} and Signal deployed a hybrid PQ handshake~\cite{pqxdh}. Angel et al.~\cite{pqnoise2022} relate most directly to our transport work, since they replaced the Diffie-Hellman (DH) operations of Noise with KEMs. None of these works targets Lightning or any payment channel network. PQLN brings the same migration practice to Basis of Lightning Technology (BOLT) 8 by hybridizing Noise\_XK while keeping the classical handshake byte-identical.

For onion routing, Ghosh and Kate~\cite{ghoshkate2015} proposed HybridOR, a hybrid circuit-extension key exchange for Tor. Rial et al.~\cite{outfox2025} later rebuilt the Sphinx packet format around KEMs with Outfox, which carries one KEM ciphertext per hop and therefore suits only fixed-length routes. Neither work targets Lightning's payment onion, whose packet carries a fixed 1300-byte payload over a variable number of hops. PQLN protects this surface by keeping the packet untouched and carrying the ML-KEM ciphertexts beside it (Section~\ref{sec:onion}).

PQ adaptor signatures target payment channels directly. Esgin et al.~\cite{esgin2020} built payment channel networks from the first PQ adaptor signature, and Tairi et al.~\cite{tairi2021} followed with an isogeny-based construction for privacy-preserving off-chain payments. Both would protect the funds inside channels once Bitcoin can verify their signatures, which it cannot do today, whereas PQLN protects the surfaces that no base-layer upgrade touches.

Kiayias and Litos~\cite{kiayias2020} formally proved the security of the Lightning protocol in the universal composition setting, and Kurt et al.~\cite{lngate2} extended Lightning to resource-constrained IoT devices through threshold cryptography, though both assume a classical adversary. The only Lightning-specific examination of the quantum threat is a forum post by Osuntokun~\cite{roasbeef2026pq}, which weighs PQ candidates for each layer and stops short of a design or an implementation. To the best of our knowledge, PQLN is thus the first design, implementation, and evaluation of PQ protection for Lightning's off-chain surfaces.

\section{Preliminaries}
\label{sec:preliminaries}

This section provides background on LN and the PQ standards. We explain the LN mechanisms on the example of Fig.~\ref{fig:overview}, in which Alice pays Carol through Bob, and note the elliptic-curve operation behind each mechanism.

\begin{table*}[!t]
\caption{Lightning Protocol Surfaces Vulnerable to Shor's Algorithm and Their Treatment in PQLN}
\label{tab:surfaces}
\centering
\scriptsize
\setlength{\tabcolsep}{4pt}
\begin{tabular}{c l l l}
\hline
\textbf{BOLT} & \textbf{Scope of the specification} & \textbf{Shor-vulnerable surface} & \textbf{Treatment in PQLN} \\
\hline
1, 10 & Base protocol and DNS bootstrap & None & Not applicable \\
2 & Peer protocol for channel management & ECDSA over on-chain transactions & Out of scope (rooted on-chain) \\
3 & Bitcoin transaction and script formats & secp256k1 keys inside Bitcoin Script & Out of scope (requires a Bitcoin change) \\
4 & Onion routing protocol & Per-hop ECDH & Protected, hybrid ML-KEM (Sec.~\ref{sec:onion}) \\
5 & On-chain transaction handling & ECDSA over on-chain transactions & Out of scope (rooted on-chain) \\
7 & P2P node and channel discovery & Node-key ECDSA on gossip messages & Protected, ML-DSA (Sec.~\ref{sec:gossip}) \\
8 & Encrypted and authenticated transport & ECDH in the Noise handshake & Protected, hybrid ML-KEM (Sec.~\ref{sec:transport}) \\
9 & Assigned feature flags & None & Two experimental feature bits registered \\
11 & Invoice protocol for payments & Node-key recoverable ECDSA & Protected, ML-DSA (Sec.~\ref{sec:invoices}) \\
12 & Flexible protocol for payments (offers) & Schnorr signatures and per-hop ECDH & Protected, ML-DSA and hybrid ML-KEM (Sec.~\ref{sec:offers}) \\
\hline
\end{tabular}
\end{table*}

\subsection{Lightning Network}
\label{sec:ln_prelim}

LN is a payment channel network on top of Bitcoin~\cite{poon2016}. Every LN node is identified by a secp256k1 public key called the \textit{node id}, with which it authenticates its transport sessions and signs its gossip messages and payment requests. The BOLT specifications define the protocol in numbered documents~\cite{bolts}.

\textit{Payment Channels:} Alice opens a channel to Bob with a \textit{funding transaction} that locks her coins into an output under the control of both parties, and the channel balances live in \textit{commitment transactions} that the parties update and sign at every payment. Broadcasting the latest commitment transaction closes the channel, while broadcasting an outdated one lets the counterparty claim the entire balance through a \textit{penalty transaction}. All three are ordinary Bitcoin transactions, so the safety of the funds inside a channel rests on Bitcoin's own cryptography.

\begin{figure}[!ht]
\centering
\includegraphics[width=0.9\columnwidth]{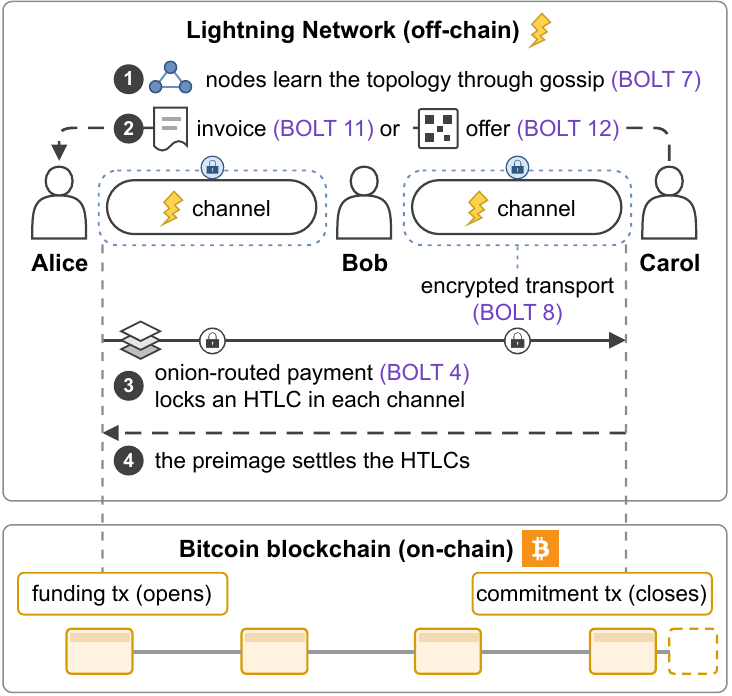}
\caption{Overview of an LN payment.}
\label{fig:overview}
\end{figure}

\textit{Gossip:} Nodes learn the network topology through the gossip messages of BOLT 7. Every node broadcasts a \textit{node\_announcement} that advertises its identity, features, and network addresses. A public channel enters the graph through a \textit{channel\_announcement} signed by both node keys and both funding keys, and both endpoints broadcast \textit{channel\_update} messages that advertise their forwarding fees and policies. All of these messages are signed with the announcing nodes' ECDSA keys, and nodes assemble them into the network graph to compute payment routes.

\textit{Transport:} Two connected nodes exchange all protocol messages inside an encrypted and authenticated session defined in BOLT 8. The session keys are derived with the Noise\_XK handshake~\cite{noise2018}, which runs three ECDH operations on secp256k1. The handshake authenticates the responder against its static public key, so the initiator must know this key in advance.

\textit{Payments:} A payment starts when Carol hands Alice a BOLT 11 \textit{invoice}, typically as a QR code. The invoice carries the amount and a \textit{payment hash} chosen by Carol, and she signs it with a recoverable ECDSA signature under her node key, from which Alice recovers Carol's node id. BOLT 12 \textit{offers} extend this flow with a small long-lived payment code that is published once and paid many times. A payer that scans an offer requests a fresh invoice from the payee, who signs it with a Schnorr signature. A payment can traverse multiple channels through intermediaries such as Bob, so Alice needs no channel to Carol. Every channel on the route locks the payment amount in a Hash Time Locked Contract (HTLC) that pays out against the \textit{preimage} of the payment hash and refunds the sender after a timeout. Carol therefore claims the payment by revealing the preimage, which settles every HTLC back along the route.

\textit{Payment Onion:} A payment's routing information travels in a Sphinx onion packet~\cite{sphinx2009} with a fixed payload of 1300 bytes, as defined in BOLT 4. The sender derives one ECDH secret per hop and encrypts the routing information in layers, so each hop learns only its predecessor and its successor. A payee that wants to hide its identity can additionally use \textit{blinded paths}, which replace the final hops of the route with encrypted routing data that it prepares with the same per-hop ECDH. The same construction also carries the \textit{onion messages} that BOLT 12 uses to request invoices.

\textit{Extensibility:} LN messages are extended with appended Type-Length-Value (TLV) records. BOLT 1 makes a record with an odd type optional, so an implementation that does not recognize it skips the record, whereas an unknown even type fails the message. The specifications call this rule \textit{it's ok to be odd}. BOLT 11 decoders likewise skip unknown tagged fields, and nodes advertise capabilities through the feature bits of BOLT 9. These mechanisms let upgraded and unmodified nodes coexist, and PQLN uses them to interoperate with today's network.

\subsection{Post-Quantum Standards}
\label{sec:pqc_prelim}

PQLN uses two of the PQ standards that NIST published in August 2024, namely ML-DSA for signatures and ML-KEM for key exchange. Both rest on lattice problems that no known quantum algorithm solves efficiently.

\textit{ML-DSA:} ML-DSA is standardized in FIPS 204 and originates from CRYSTALS-Dilithium~\cite{fips204}. It offers the same interface as ECDSA and Schnorr, so a protocol can attach an ML-DSA signature wherever it already produces a classical one. ML-DSA also binds an application-chosen \textit{context string} into every signature, so a signature created for one purpose cannot be reused for another.

\textit{ML-KEM:} ML-KEM is standardized in FIPS 203, originates from CRYSTALS-Kyber, and takes the place of DH key agreement~\cite{fips203}. Key generation produces an \textit{encapsulation key} and a \textit{decapsulation key}. Anyone holding the encapsulation key can \textit{encapsulate}, which yields a fresh shared secret and a \textit{ciphertext}, and decapsulating the ciphertext with the decapsulation key recovers the same secret. A protocol must therefore know a party's encapsulation key in advance and deliver the ciphertext to that party. Decapsulating a mismatched or tampered ciphertext does not fail but returns an unrelated secret, a behavior called \textit{implicit rejection}.

\textit{FN-DSA:} NIST also selected Falcon~\cite{falcon2020} for standardization in FIPS 206 and renamed it the FFT over NTRU-Lattice-Based Digital Signature Algorithm (FN-DSA), but had not yet published the standard at the time of writing. Its keys and signatures are much smaller than those of ML-DSA, but its key generation is far slower, and Section~\ref{sec:evaluation} measures it as an alternative to ML-DSA.

\section{PQLN Design}
\label{sec:pqln_design}

This section presents the design of PQLN. We first give an overview and the threat model, and then treat each surface in turn. Two hard constraints shape the design, one imposed by Bitcoin and one by the network itself. First, no Lightning upgrade can change Bitcoin's consensus rules, so the secp256k1 keys inside funding, commitment and penalty transactions stay as they are. Second, the network cannot be upgraded at once, so every mechanism has to interoperate with unmodified nodes, which we call \textit{vanilla} nodes. PQLN therefore targets the protocol surfaces that live purely off-chain and adds its protection where the extensibility rules of Section~\ref{sec:ln_prelim} let vanilla nodes skip it as unknown data. Table~\ref{tab:surfaces} maps every BOLT~\cite{bolts} to its Shor-vulnerable surface and to its treatment in PQLN. The surfaces of BOLTs 2, 3 and 5 are rooted in Bitcoin transactions and are therefore out of scope. The remaining five BOLTs carry cryptography that lives purely off-chain, and PQLN protects all five of them.

PQLN follows a \textit{hybrid conservative-extension} approach. It keeps all of the existing secp256k1 cryptography in place and adds a PQ primitive to every protected surface, as TLS and Signal do~\cite{tlshybrid, pqxdh}. Against a classical adversary, PQLN is therefore at least as secure as vanilla Lightning, and a quantum adversary additionally has to break the PQ primitive.

We use the two NIST PQ standards of Section~\ref{sec:pqc_prelim}. Signatures use ML-DSA-44 with public keys of 1312 bytes and signatures of 2420 bytes~\cite{fips204}. We chose the smallest standardized parameter set because signature and key material dominates our wire overhead. FN-DSA would shrink this material further, but its standard is not final, so we keep ML-DSA-44 as the default and measure FN-DSA as an alternative in Section~\ref{sec:evaluation}. Key exchange uses ML-KEM-768, the set that TLS and OpenSSH adopted, with encapsulation keys of 1184 bytes, ciphertexts of 1088 bytes and 32-byte shared secrets~\cite{fips203}. Every node derives its ML-DSA signing key and its static ML-KEM key from its existing seed at two dedicated hardened BIP 32 indices~\cite{bip32}, so the wallet backup that restores its classical identity also restores its PQ identity. Hardened derivation is one-way, so a node key recovered with Shor's algorithm reveals nothing about the PQ keys derived from the same seed.

The gossip layer distributes these keys and serves as the trust anchor of PQLN. Every node publishes its ML-DSA and ML-KEM public keys inside its signed \textit{node\_announcement}, and every PQLN node pins these keys on first sight. All other surfaces then verify against the pinned keys under this \textit{trust-on-first-use} model. The only exception is BOLT 12, where an offer additionally commits a per-offer signing key so that invoice verification works even for payees that never appeared in gossip. PQLN also registers two experimental feature bits so that peers can discover support. However, a quantum adversary can flip these bits, so no security decision depends on them.

The signature surfaces of gossip, invoices and offers are always on, so a PQLN node attaches its PQ signatures wherever it signs. A vanilla peer ignores them, whereas a verifier with a trusted key gains the protection. The remaining surfaces, namely the transport, the payment onion and the blinded paths, rely on key exchange and need both endpoints to take part. They are therefore explicit opt-ins through a dedicated PQ port and the three configuration flags of Sections~\ref{sec:offers} and~\ref{sec:onion}.

\subsection{Threat Model}
\label{sec:threat}

We assume an adversary that possesses a cryptographically relevant quantum computer and runs Shor's algorithm at scale~\cite{shor1997}. It recovers the secret key behind any secp256k1 public key that it observes~\cite{aggarwal2018}. It can therefore forge the ECDSA and Schnorr signatures of any node whose key appeared in gossip, in an invoice or on the wire, and it can compute the shared secret of any ECDH exchange from the two public keys alone. Symmetric primitives and hash functions only face Grover's quadratic speedup, and the parameter sizes in use absorb it with an ample margin~\cite{grover1996}. We therefore treat ChaCha20-Poly1305, SHA-256 and HMAC as secure. The adversary attacks on two timelines. In the retroactive timeline, it records classical traffic today and breaks it once a quantum computer exists. This \textit{harvest-now-decrypt-later} strategy already endangers Lightning's confidentiality surfaces~\cite{mosca2018, nistir8547}. In the live timeline, it attacks in real time by forging signatures, substituting keys and inserting itself into new sessions.

We also grant the adversary full control of the network, so it observes, delays, modifies and injects messages between any pair of nodes. Moreover, it participates in the network itself by running well-connected nodes, opening channels, relaying gossip and forwarding payments. It can also act as the counterparty, the routing hop or the payee of a victim. Based on this adversary, we consider the following attacks against Lightning:

\begin{itemize}
\item \textit{Threat 1. Node Impersonation:} The adversary forges a victim node's gossip signatures to announce substituted keys or attacker-chosen parameters on the victim's behalf.
\item \textit{Threat 2. Transport Decryption:} The adversary breaks the ECDH operations of the BOLT 8 handshake to decrypt recorded sessions or to impersonate a responder in new sessions.
\item \textit{Threat 3. Invoice Forgery:} The adversary forges the payee signature on a BOLT 11 or BOLT 12 invoice to substitute the payment hash, the amount or the payment paths.
\item \textit{Threat 4. Payment Deanonymization:} The adversary recovers the per-hop ECDH secrets of the payment onion, unwraps its layers and links payer, route and payee. The same capability strips the privacy of blinded paths and onion messages.
\item \textit{Threat 5. Downgrade Attacks:} The adversary strips or tampers with the PQ additions of a message so that the receiver processes it as a vanilla message, and then breaks the exchange like any other vanilla exchange.
\end{itemize}

Several threats are outside our model. Lightning-level mechanisms cannot prevent attacks on the Bitcoin layer, which await Bitcoin's own PQ transition~\cite{aggarwal2018}. These include the theft of channel funds by breaking the secp256k1 keys of funding, commitment or HTLC outputs. Denial-of-service attacks are orthogonal to quantum resistance. We also assume that endpoints are not compromised. Finally, the trust-on-first-use model assumes that a node pinned the keys of its peers before a cryptographically relevant quantum computer exists, since a first contact made afterwards can be intercepted by a live adversary. Section~\ref{sec:security_analysis} states this assumption formally and Section~\ref{sec:discussion_and_limitations} discusses it.

\subsection{PQ Gossip (BOLT 7)}
\label{sec:gossip}

We begin with gossip because it is the trust anchor of PQLN. The \textit{node\_announcement} and \textit{channel\_update} messages of Section~\ref{sec:ln_prelim} are signed only with the announcing node's ECDSA key, so a quantum adversary could rewrite them freely (Threat 1). Every node already receives a \textit{node\_announcement} from every other node, so PQLN uses this message to publish a node's ML-DSA and ML-KEM public keys to the whole network.

\begin{figure}[!t]
\centering
\includegraphics[width=0.8\columnwidth]{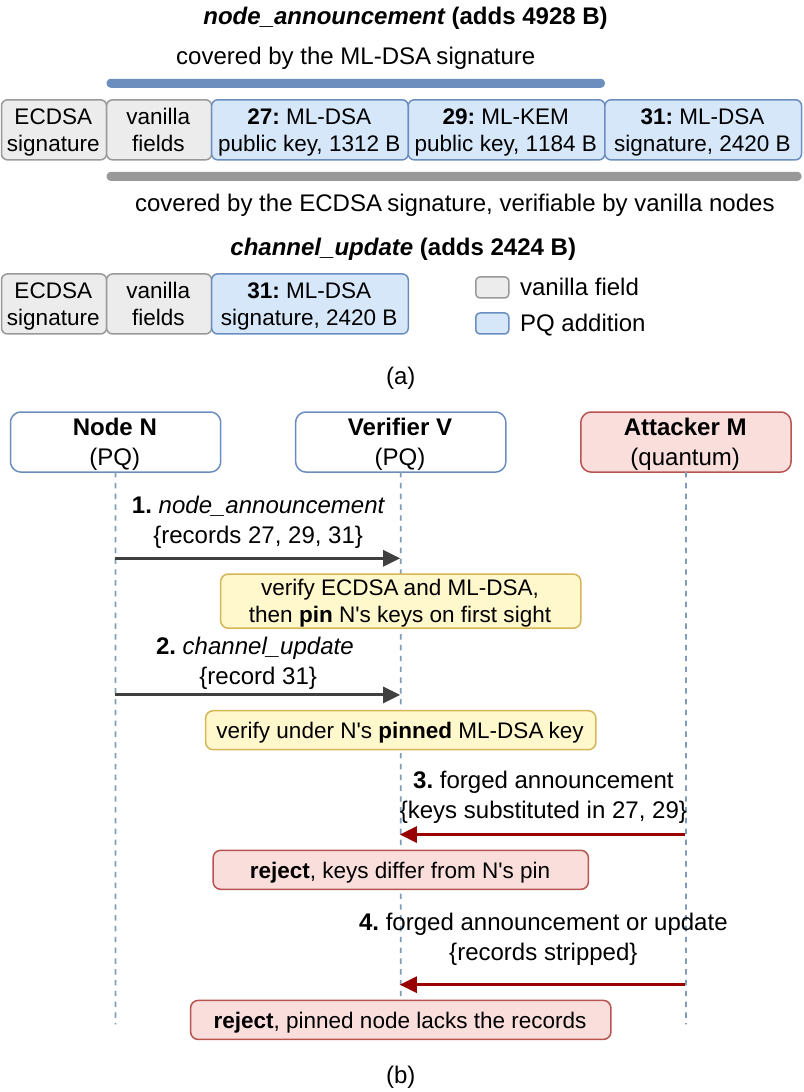}
\caption{The PQLN gossip design. (a) The extended gossip messages. The ML-DSA signature covers the published keys and the ECDSA signature covers the complete message. (b) The verifier pins a node's keys on first sight and afterwards rejects substituted keys and stripped records.}
\label{fig:gossip}
\end{figure}

Fig.~\ref{fig:gossip}(a) shows the modified messages. A PQLN node appends three odd TLV records to the tail of its \textit{node\_announcement}. Record 27 carries the node's ML-DSA public key, record 29 carries its static ML-KEM encapsulation key and record 31 carries an ML-DSA signature over the serialized announcement, including the two key records. The ML-DSA signature is computed under a gossip-specific domain-separation context and appended last, and the node then produces its classical ECDSA signature over the complete message. The classical signature thus commits to the PQ records, so a vanilla node verifies the announcement exactly as it does today and skips the unknown odd records under the \textit{it's ok to be odd} rule. A \textit{channel\_update} is extended the same way but carries only the signature record, since the verifier already knows the signer's ML-DSA key from its \textit{node\_announcement}.

Fig.~\ref{fig:gossip}(b) shows how a PQLN node processes incoming gossip. The verifier runs the PQ checks after the vanilla checks and updates its pins only after every check has passed. A rejected message therefore never alters a pin. Specifically, the verifier first checks the classical ECDSA signature as usual and then parses the PQ records. If the announcement carries a key without the signature, or the signature without the ML-DSA key, the verifier rejects it as malformed. Otherwise, the verifier checks the ML-DSA signature against the embedded key. If the node is new, the verifier pins its ML-DSA and ML-KEM keys. If the node is already pinned, the verifier rejects the announcement in three cases, namely when a key differs from the pinned key, when a pinned key is missing, or when the signature is missing. A \textit{channel\_update} from a pinned node must likewise carry a valid ML-DSA signature under the pinned key, but it carries no keys and therefore never establishes a pin.

Gossip relay needs one adjustment. Vanilla rust-lightning refuses to relay a gossip message when its unrecognized trailing data exceeds 1024 bytes, so a vanilla node accepts, verifies and stores a PQ announcement but does not forward it. PQLN raises this relay budget to 8192 bytes, so PQ gossip propagates across the PQ-aware part of the network, while a vanilla node on the way stops the propagation without rejecting the message. The budget caps the unrecognized data that a node relays per message at 1~kB for a vanilla node and 8~kB for a PQLN node, and relaying still requires valid signatures and a channel with an on-chain funding output, so the extra bytes stay tied to funded channels. Section~\ref{sec:interop} confirms this behavior with real nodes.

We deliberately leave \textit{channel\_announcement}, the third signed gossip message, classical. Two of its four signatures prove ownership of the on-chain funding output and cannot be protected off-chain, and protecting the two node-key signatures alone would leave a message that a quantum adversary can still half forge. A node's identity keys and forwarding parameters stay fully protected through \textit{node\_announcement} and \textit{channel\_update}.

\subsection{PQ Transport (BOLT 8)}
\label{sec:transport}

Every Lightning message travels inside the BOLT 8 transport of Section~\ref{sec:ln_prelim}. Its Noise\_XK handshake derives the session keys from three ECDH operations on secp256k1. The operation against the responder's static key also authenticates the responder, and Shor's algorithm breaks all three operations. A recorded session therefore becomes decryptable retroactively, and a live quantum adversary can impersonate any responder after observing its static key (Threat 2).

PQLN hybridizes the handshake with two ML-KEM encapsulations, as shown in Fig.~\ref{fig:transport}. The initiator must know the responder's static ML-KEM key in advance, from the gossip pin of Section~\ref{sec:gossip} or out of band, just as Noise\_XK already assumes for the classical static key. In act one, the initiator encapsulates to that pinned key and appends the ciphertext $ct_s$ together with a freshly generated ephemeral ML-KEM public key $ek_e$. Only the true responder can decapsulate $ct_s$, which authenticates it against a quantum impersonator. The responder in turn encapsulates to $ek_e$ and appends the resulting ciphertext $ct_e$ to act two. This provides forward secrecy, since a later compromise of the static keys does not reveal the ephemeral secret. Act three stays unchanged. Each of the first two acts mixes its ECDH secret into the Noise chaining key, and the hybrid handshake folds the ML-KEM shared secret of the same act into the chaining key right afterwards. Both ciphertexts and $ek_e$ are also absorbed into the transcript hash, so the session keys become hybrid over five shared secrets.

\begin{figure}[!t]
\centering
\includegraphics[width=0.8\columnwidth]{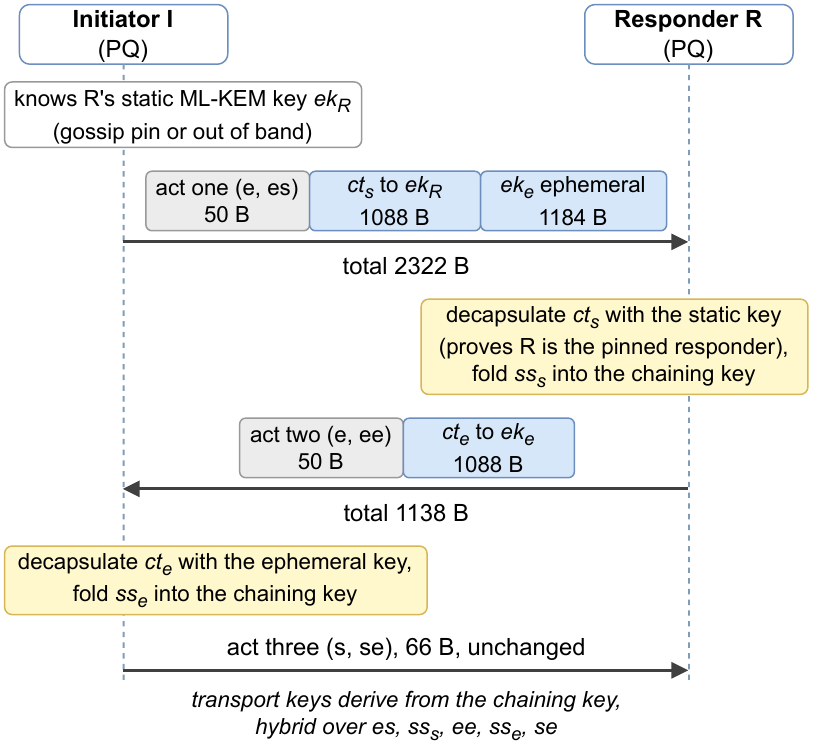}
\caption{The PQLN BOLT 8 handshake.}
\label{fig:transport}
\end{figure}

PQLN deliberately has no in-band negotiation of PQ support, because a negotiation message is exactly what a quantum adversary could rewrite to force a downgrade (Threat 5). The hybrid handshake instead runs on a dedicated port, and the classical port stays byte-identical to vanilla. In our implementation, the operator supplies the PQ address of a peer, and a deployment would advertise it as an additional address in the \textit{node\_announcement}, which the pinned ML-DSA signature already covers. Act three still authenticates the initiator classically, so a quantum adversary could connect under another node's identity. This is of little value, since Lightning nodes accept inbound connections from anonymous peers by design and the other surfaces of PQLN authenticate every further action of a connected peer.

\subsection{PQ Invoices (BOLT 11)}
\label{sec:invoices}

A BOLT 11 invoice commits to the payment hash, the amount and the destination of a payment. Its recoverable ECDSA signature is all that binds the request to the payee, and the payee's node id is usually recovered from that signature rather than carried in the invoice. A quantum adversary that forges the signature can therefore substitute the payment hash and the destination of an intercepted invoice and collect the payment (Threat 3).

PQLN adds the payee's ML-DSA signature and optionally the corresponding public key to the invoice. A BOLT 11 tagged field carries at most 639 bytes because of its 10-bit length encoding. PQLN therefore splits the two values into chunks and carries them as several tagged fields under two unassigned tag values, three fields for the public key and four for the signature (Fig.~\ref{fig:invoice}, top). A vanilla decoder skips unknown tagged fields, so the invoice still parses everywhere. The ML-DSA signature is computed over the human-readable part and the data part of the invoice, including the embedded public key fields, under an invoice-specific domain-separation context. The classical signature is produced afterward so that it commits to the PQ fields. Embedding the public key makes the invoice self-contained, while a payee that expects to be found in gossip can omit it and keep the invoice small enough for a QR code.

\begin{figure}[!t]
\centering
\includegraphics[width=0.8\columnwidth]{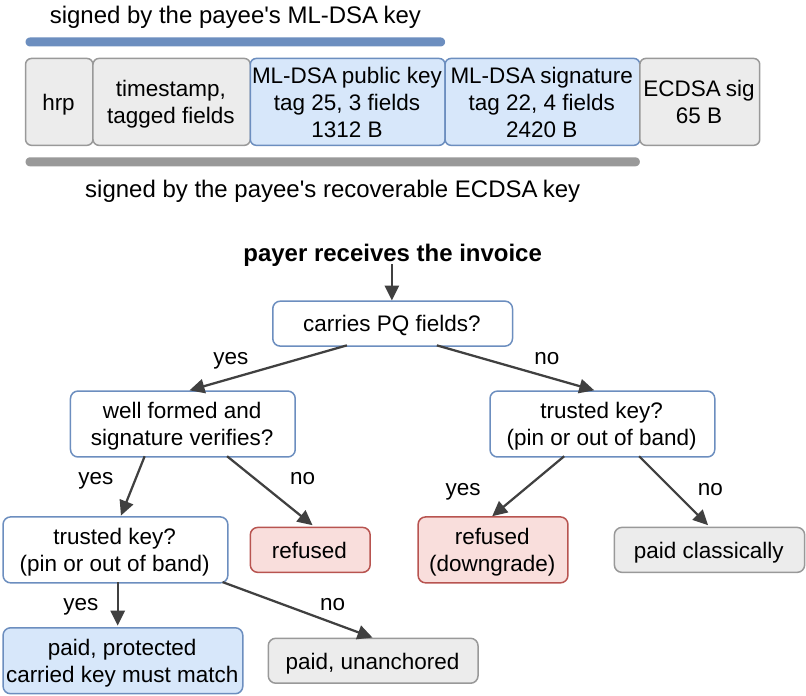}
\caption{The PQLN BOLT 11 invoices.}
\label{fig:invoice}
\end{figure}

Verification runs inside the pay path with no opt-in and applies the following policy (Fig.~\ref{fig:invoice}, bottom). Before dispatching any HTLC, the payer resolves a trusted key for the payee, from an out-of-band source or from the payee's gossip pin. When a trusted key exists, the payer requires a valid ML-DSA signature under that key and refuses the invoice as a downgrade if the signature is missing. It also refuses the invoice if an embedded public key differs from the trusted key. Without a trusted key, a vanilla invoice is paid as today, and an invoice that carries PQ fields is paid only if they are well formed and the signature verifies under any carried key. We call this last case \textit{unanchored}, and it gives no protection because a signature under a self-asserted key proves nothing against an adversary who can mint both. A first contact with an unannounced payee therefore stays unprotected.

\begin{figure}[!t]
\centering
\includegraphics[width=0.8\columnwidth]{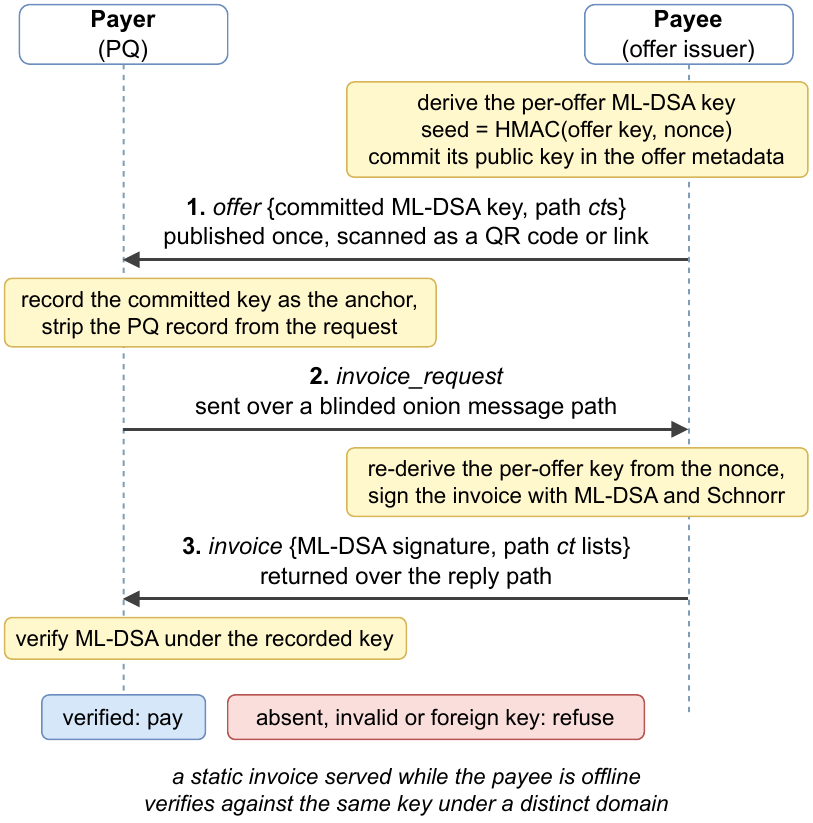}
\caption{The PQLN BOLT 12 offers.}
\label{fig:offers}
\end{figure}

\begin{figure*}[!t]
\centering
\includegraphics[width=0.85\textwidth]{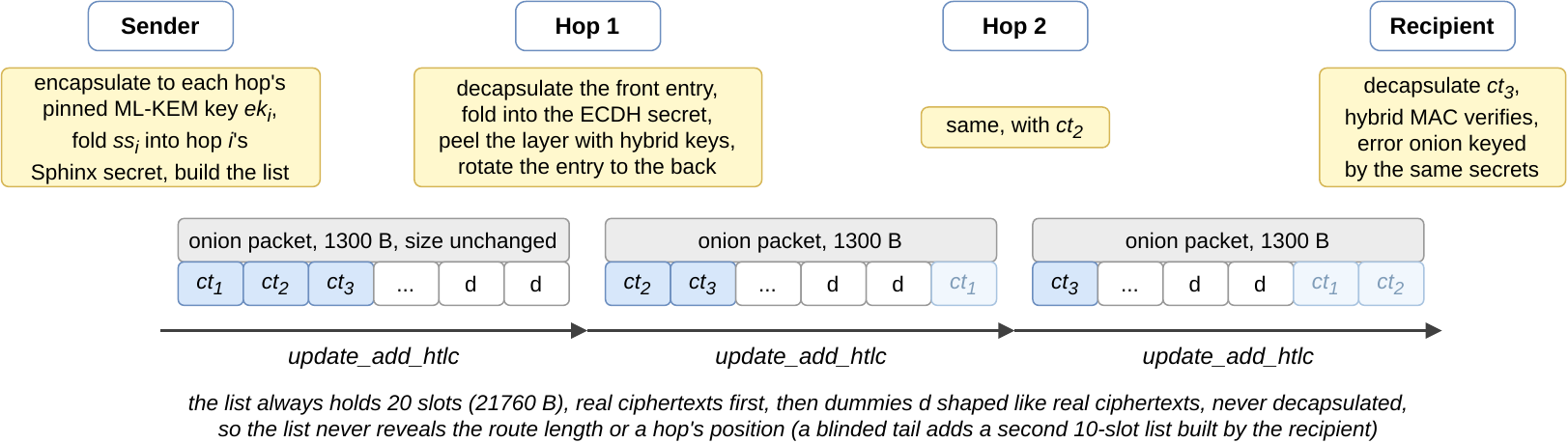}
\caption{The PQLN payment onion.}
\label{fig:onion}
\end{figure*}

\subsection{PQ Offers (BOLT 12)}
\label{sec:offers}

With BOLT 12 offers, the payer sends an \textit{invoice\_request} to the payee over a blinded onion message path and receives a freshly signed invoice in return. The returned invoice is bound to the offer only by a Schnorr signature, so a quantum adversary between the two parties can forge it and substitute the payment hash, the amount or the payment paths (Threat 3). The gossip pin rarely helps here, since offer payees are commonly unannounced nodes reachable only through blinded paths. The privacy of the exchange in turn rests on the per-hop ECDH of those paths (Threat 4).

PQLN closes this gap with a trust anchor already in the payer's possession, namely the offer itself. When a PQLN node builds an offer, it derives a per-offer ML-DSA key and commits the public key inside the offer, as shown in Fig.~\ref{fig:offers}. The signing seed is an HMAC of the node's symmetric offer key and the offer's nonce, so a quantum adversary cannot recover it from any published key, and distinct offers carry unlinkable keys. The committed key is carried in the offer's metadata record, together with one ML-KEM ciphertext per PQ blinded path. Vanilla payers copy this record verbatim into the \textit{invoice\_request} and payees echo it back, so every classical implementation handles it consistently, whereas a PQLN payer strips it from its request.

The payee re-derives the per-offer key from the offer's nonce and signs the responding invoice under a BOLT 12 domain-separation context. The signature is placed in an odd record of the invoice's experimental TLV range and written into the unsigned invoice, so the classical Schnorr signature covers it. A vanilla node ignores the unknown record but keeps it in place and still verifies the classical signature as usual. Verification again runs inside the pay path with no opt-in. A payer that scanned a PQ offer records the committed key, and the received invoice must verify under exactly that key before any HTLC is dispatched. The same anchor covers the static invoices that an often-offline payee pre-signs for asynchronous payments, and these verify under a distinct signing domain, so they cannot be replayed as regular invoices.

The privacy surfaces of BOLT 12 use the hybrid key exchange of the payment onion. The recipient enables PQ blinded paths with the \textit{build\_\allowbreak post\_\allowbreak quantum\_\allowbreak blinded\_\allowbreak paths} flag. The offer's message paths, the reply path of the \textit{invoice\_request} and the blinded payment paths inside the invoice are then built with the hybrid route blinding of Section~\ref{sec:onion}. Their ciphertexts reach the senders in the offer metadata, beside the reply path in the onion message and in a second experimental invoice record, respectively. Refund paths are built the same way. Two BOLT 12 signatures remain classical, namely the payer's signature on the \textit{invoice\_request} and the payee's signature on a refund's invoice, since the verifier has no trusted key for either signer.

\subsection{PQ Payment Onion (BOLT 4)}
\label{sec:onion}

The Sphinx packet of Section~\ref{sec:ln_prelim} carries a fixed 1300-byte payload, and the sender builds it from one ECDH secret per hop. Each hop derives the keys of its own layer from its ECDH secret, so a quantum adversary that recovers these secrets unwraps the route hop by hop, links payer to payee and reads every per-hop payload (Threat 4). An ML-KEM ciphertext of 1088 bytes would nearly fill the packet, and vanilla nodes cannot forward a packet of any other size. PQLN therefore keeps the onion payload at 1300 bytes and instead makes the per-hop keys hybrid.

Fig.~\ref{fig:onion} shows the construction. For every hop, the sender encapsulates to the hop's pinned ML-KEM key and folds the shared secret into the hop's classical Sphinx secret, so the encryption and MAC keys of that layer become hybrid while the onion keeps its vanilla format. The ciphertexts travel beside the onion in a new odd field of \textit{update\_add\_htlc} as a fixed-size list of 20 slots, the onion's maximum hop count. Real entries come first in hop order and dummies fill the rest, so the list looks the same for every route length. The dummies are not random bytes but are compressed and encoded from uniformly random polynomial coefficients exactly as FIPS 203 builds a ciphertext, since the compression makes some encoded values more likely than others and a hop could otherwise count the real entries. Each hop decapsulates the front entry, folds the secret into its classical ECDH result, peels its layer with the hybrid keys and rotates the entry to the back. The list carries no key or MAC of its own, and its integrity comes from the hybrid onion MAC instead, so a tampered, reordered or dropped entry yields a wrong hybrid secret at that hop and the payment fails closed.

The hybrid secret also becomes the hop's incoming secret, so the return error onion and its attribution data inherit the protection. Blinded payment paths are built by the recipient instead of the sender, so the recipient protects them. When building a PQ blinded path, it encapsulates to each path hop's ML-KEM key and folds the secrets into the route-blinding schedule, so the blinded node ids and the encrypted per-hop data are hybrid before the sender sees the path. The ciphertexts reach the sender in the invoice record of Section~\ref{sec:offers} and travel in a second fixed list of 10 slots beside the onion.

A key exchange needs both endpoints to take part, so PQLN cannot force the hybrid onion on every route. Two configuration flags therefore decide when a node requires it. A sender builds the hybrid onion whenever every hop has a pinned ML-KEM key and otherwise falls back to a classical onion, unless the \textit{require\_\allowbreak post\_\allowbreak quantum\_\allowbreak payments} flag forbids the fallback. Similarly, the \textit{require\_\allowbreak post\_\allowbreak quantum\_\allowbreak inbound} flag makes a forwarding or receiving node fail back any HTLC that arrives without the hybrid protection (Threat 5).

\section{Security Analysis}
\label{sec:security_analysis}

In this section, we show how PQLN addresses the threats of Section~\ref{sec:threat}. We give one proposition per threat, each with a proof sketch that reduces its guarantee to the standard notions of ML-DSA and ML-KEM.

\subsection{Model and Assumptions}
\label{sec:sec_model}

We write $(K, c) \leftarrow \mathsf{Encaps}(ek)$ for ML-KEM-768 encapsulation, whose ciphertext $c$ decapsulates to $K$ under the decapsulation key $dk$, and $\mathsf{Vf}(pk, m, \sigma, \mathsf{ctx})$ for ML-DSA-44 verification under the per-surface context string $\mathsf{ctx}$. The hash $H$ is SHA-256. A node $N$ holds an ML-DSA key pair $(pk_N, sk_N)$ and a static ML-KEM key pair $(ek_N, dk_N)$. The pin $\mathsf{Pin}_V[N]$ denotes the pair $(pk_N, ek_N)$ that a verifier $V$ stored on first sight, or $\bot$ if $V$ never saw $N$. Outside the transport, PQLN derives the hybrid secret
\begin{equation}
s^{*} = H(\tau \,\|\, s_c \,\|\, K)
\label{eq:fold}
\end{equation}
under a per-surface tag $\tau$ wherever it folds a classical ECDH secret $s_c$ with an ML-KEM secret $K$. The transport folds $K$ into Noise's chaining key through HKDF instead.

We model the adversary $\mathcal{A}$ of Section~\ref{sec:threat} by giving it a discrete-logarithm oracle on secp256k1. It thus knows the secret key behind every public key and the shared secret of every ECDH exchange that it has observed. It also controls the network and runs nodes of its own, while the endpoints under analysis are honest.

\begin{assumption}
ML-DSA-44 is existentially unforgeable under chosen-message attacks (EUF-CMA) and ML-KEM-768 is indistinguishable under chosen-ciphertext attacks (IND-CCA) against quantum adversaries, as the standards claim~\cite{fips204, fips203}. ChaCha20-Poly1305 is a secure authenticated cipher, and $H$, HMAC and HKDF behave as random oracles or pseudorandom functions, as Section~\ref{sec:threat} assumes.
\end{assumption}

\begin{assumption}
For every honest node $N$ and verifier $V$ with $\mathsf{Pin}_V[N] \neq \bot$, the pin holds the keys that $N$ generated, and every offer that a payer holds carries the key that its payee committed. This trust-on-first-use assumption of Section~\ref{sec:threat} holds when the announcement or the offer arrived either before a quantum adversary existed or over an authenticated channel such as a QR code scanned in person.
\end{assumption}

Against a classical adversary, PQLN is at least as secure as vanilla Lightning, since every classical check stays in place and a message that must carry both signatures is unforgeable when either scheme is~\cite{bindel2017}. Moreover, (\ref{eq:fold}) follows the random-oracle KEM combiner of Giacon et al.~\cite{giacon2018} with the ciphertexts left out of the hash. This is safe because ML-KEM decapsulation binds the shared secret to its ciphertext through re-encryption~\cite{fips203} and a different ephemeral key yields a different ECDH secret. Cremers et al. formalize the former property as HON-BIND-K-CT, under which the output key of an honestly generated key pair determines the ciphertext, and report that the known attacks on ML-KEM's binding need a maliciously formed key pair~\cite{cremers2024}, which Assumption 2 excludes. The output of (\ref{eq:fold}) thus stays random when either input secret is, so the propositions below address only the quantum adversary.

\subsection{Security Against the Threats}
\label{sec:sec_props}

\begin{proposition}[Node impersonation]
Under Assumptions 1 and 2, if a verifier $V$ with $\mathsf{Pin}_V[N] \neq \bot$ accepts a \textit{node\_announcement} or \textit{channel\_update} attributed to an honest node $N$, then $N$ produced it, except with probability at most $\mathbf{Adv}^{\mathrm{EUF\text{-}CMA}}_{\mathrm{ML\text{-}DSA}}(\mathcal{B})$ for an adversary $\mathcal{B}$ of similar running time.
\end{proposition}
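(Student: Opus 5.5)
The proof is a direct reduction to the EUF-CMA security of ML-DSA-44. We construct $\mathcal{B}$, which receives a challenge public key $pk^\star$ and a signing oracle. $\mathcal{B}$ then simulates the whole Lightning network for $\mathcal{A}$, with $pk^\star$ playing the role of $pk_N$ for the honest node $N$. All other keys are generated honestly by $\mathcal{B}$. This includes $N$'s secp256k1 node key, its ML-KEM key pair, and every other node's keys. $\mathcal{B}$ also answers $\mathcal{A}$'s discrete-logarithm oracle itself. It does so trivially for keys it generated, and for adversarial keys it needs no answer, since $\mathcal{A}$ already knows them. Whenever $N$ must publish a \textit{node\_announcement} or \textit{channel\_update}, $\mathcal{B}$ obtains the ML-DSA signature from its oracle under the gossip context. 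It then appends record 31 and computes the ECDSA signature itself, exactly as Section~\ref{sec:gossip} prescribes. The simulation is perfect, so $\mathcal{A}$'s view is identical to a real execution. If the statement quantifies over an arbitrary honest $N$ among $n$ nodes, $\mathcal{B}$ first guesses $N$. I would instead state the bound per fixed $N$ to avoid the factor $n$.

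Next, we show that acceptance of a message $m$ attributed to $N$ and not produced by $N$ yields a forgery. By Assumption~2, $\mathsf{Pin}_V[N]=(pk_N,ek_N)$ holds $N$'s genuine keys. We then follow the verifier policy of Fig.~\ref{fig:gossip}(b) case by case. For a \textit{node\_announcement} from a pinned node, $V$ rejects the message if the ML-DSA key record is missing, if the signature is missing, or if the key differs from the pin. Acceptance therefore implies that the embedded key equals $pk_N$ and that $\mathsf{Vf}(pk_N,\bar m,\sigma,\mathsf{ctx}_{\mathrm{gossip}})=1$, where $\bar m$ is the serialized announcement including the key records. For a \textit{channel\_update}, the policy directly requires a valid signature under the pinned key. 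In both cases, $\mathcal{B}$ outputs $(\bar m,\sigma)$.

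The step I expect to be the main obstacle is showing that $(\bar m,\sigma)$ is fresh. That is, we must argue that "$N$ did not produce the message" implies that $\bar m$ was never queried under that context. There are three subtleties. First, the ML-DSA signature covers only the serialized announcement up to record 31, while the ECDSA signature and other fields lie outside it. I would define "produced" with respect to the PQ-signed content: every field that the verifier acts on, including the addresses and the ML-KEM key, lies inside $\bar m$. Any change $\mathcal{A}$ makes with its quantum-forged ECDSA signature outside $\bar m$ is then only re-signing and is not semantic. Second, replay of an older genuine announcement is produced by $N$, so it falls outside the claim and is handled by the ordinary timestamp rules. Third, the distinct context strings of Section~\ref{sec:pqc_prelim} ensure that signatures $N$ issued on invoices or offers cannot be reinterpreted as gossip signatures. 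Each such oracle query under another context therefore differs from the pair $(\bar m,\mathsf{ctx}_{\mathrm{gossip}})$, and any accepted unproduced message is a valid EUF-CMA forgery.

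Finally, $\mathcal{B}$'s running time is that of $\mathcal{A}$ plus the cost of the simulation, so it is similar to $\mathcal{A}$'s. Its success probability is at least the probability that $V$ accepts an unproduced message. This gives the stated bound $\mathbf{Adv}^{\mathrm{EUF\text{-}CMA}}_{\mathrm{ML\text{-}DSA}}(\mathcal{B})$.
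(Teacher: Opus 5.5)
Your proposal is correct and follows the paper's own argument. By Assumption~2 the pin holds $N$'s genuine $pk_N$, so acceptance requires all expected records to be present, any carried key to equal the pin, and the ML-DSA signature to verify over content that includes the key records, while the forgeable ECDSA signature plays no role; what remains is a valid signature under $pk_N$ and the gossip context on content $N$ never signed, i.e.\ a forgery, and your explicit embedding of the challenge key, the context-separation argument, and the restriction of ``produced'' to the ML-DSA-signed content make precise what the paper's sketch leaves implicit.

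One sub-claim needs fixing: $\mathcal{A}$ may query the discrete-logarithm oracle on arbitrary points, including ones it sampled without knowing their logarithm, so $\mathcal{B}$ cannot answer every query just because it generated the honest keys. Assumption~1 is stated against quantum adversaries, however, so $\mathcal{B}$ can simply run Shor's algorithm itself, and the bound is unaffected.
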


\begin{IEEEproof}
The verifier of Section~\ref{sec:gossip} accepts such a message only if every expected record is present, any carried key equals $\mathsf{Pin}_V[N]$ and the ML-DSA signature verifies under $pk_N$ over the entire message, including the key records. The discrete-logarithm oracle lets $\mathcal{A}$ forge the ECDSA signature and set any feature bit, but neither enters this decision. A stripped record fails the first check, and a substituted key fails the second because the pin holds the keys of $N$ by Assumption 2. What remains is a message that $N$ never signed but that verifies under $pk_N$ and the gossip context, which $\mathcal{B}$ outputs as its forgery.
\end{IEEEproof}

\begin{proposition}[Transport decryption]
Let an honest initiator $I$ hold $ek_R$ of an honest responder $R$ under Assumption 2. Under Assumption 1, 1) the session keys of their hybrid handshake are indistinguishable from random for $\mathcal{A}$ even given all three ECDH secrets, 2) no party without $dk_R$ completes act two toward $I$, and 3) a later compromise of both static keys does not reveal the session keys.
\end{proposition}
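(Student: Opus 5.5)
I would prove the three claims by a short sequence of game hops, each replacing one ML-KEM shared secret with a uniformly random value and charging the difference to the IND-CCA advantage of ML-KEM-768 from Assumption 1. The running examples are the static encapsulation in act one, giving $(K_s, ct_s) \leftarrow \mathsf{Encaps}(ek_R)$, and the ephemeral encapsulation in act two, giving $(K_e, ct_e) \leftarrow \mathsf{Encaps}(ek_e)$. The discrete-logarithm oracle hands $\mathcal{A}$ all three ECDH secrets, so the classical part of the chaining key is fully known to it. Every claim therefore has to rest on $K_s$ or $K_e$ alone. By Assumption 2, $I$ encapsulates to the genuine $ek_R$, so the challenge key in the reduction is the honestly generated key of $R$. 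Because HKDF is a PRF (Assumption 1), once $K_s$ or $K_e$ is replaced by a random value, the chaining key after the corresponding mix-in, and every key derived from it, is pseudorandom even though the other HKDF inputs are known.

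For claim 1, the reduction $\mathcal{B}$ embeds its IND-CCA challenge $(ek^\ast, c^\ast, K^\ast)$ as $(ek_e, ct_e, K_e)$ in the target session. This is the ephemeral key that $I$ generates, and $R$ encapsulates to it. An alternative is to embed the challenge as $ek_R$ with $ct_s$. Using the ephemeral key is cleaner because no other session uses $ek_e$, so no decapsulation oracle queries are needed there. After the HKDF step, the session keys are independent of $\mathcal{A}$'s view.

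The delicate case is an active $\mathcal{A}$ that replaces $ct_e$ in act two. In that case $I$'s chaining key depends on $\mathsf{Decaps}(dk_e, ct')$, and $\mathcal{B}$ answers this with its decapsulation oracle. Implicit rejection ensures the value is still well defined. Since $ct_e$ is absorbed into the transcript hash, the modified session diverges from $R$'s session, so the test session is not matched and the advantage is unaffected.

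Claim 2 uses $ct_s$ instead. To complete act two toward $I$, a party must produce the act-two AEAD tag under a key derived from the chaining key after $K_s$ has been mixed in. The reduction embeds the challenge key as $ek_R$ and uses the decapsulation oracle to simulate $R$ in all other sessions. It then replaces $K_s$ by a random value, so the act-two key becomes random by the PRF property. A valid tag under that key would break the integrity of ChaCha20-Poly1305. The resulting bound is $\mathbf{Adv}^{\mathrm{IND\text{-}CCA}}_{\mathrm{ML\text{-}KEM}} + \mathbf{Adv}^{\mathrm{PRF}}_{\mathrm{HKDF}} + \mathbf{Adv}^{\mathrm{INT}}_{\mathrm{AEAD}}$, times the number of sessions for the guess of the target session.

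Claim 3 follows from the claim 1 argument. That hop used only the ephemeral pair $(ek_e, dk_e)$, which $I$ erases after the handshake. Handing $\mathcal{A}$ $dk_R$ and the static secp256k1 keys after the session therefore changes nothing in the reduction, which simply includes them in $\mathcal{A}$'s view.

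The main obstacle I expect is the bookkeeping for the active adversary in claim 1 and claim 2. The bookkeeping must show that the decapsulation oracle is never queried on the challenge ciphertext within the target session. It must also show that the transcript hash forces any modified $ct_s$, $ct_e$ or $ek_e$ into a distinct, non-test session. I would handle this with the standard Noise and KEM-combiner argument, citing \cite{pqnoise2022, giacon2018}, rather than redoing it.
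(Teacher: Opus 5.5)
Your proposal is correct and follows the paper's route. An IND-CCA hop that replaces $K_e$ at the ephemeral key $ek_e$ gives 1), and since it never touches $dk_R$ it also gives 3), while 2) follows because the act-two tag is keyed after $K_s$ has entered the chaining key; you spell out the reduction bookkeeping (decapsulation oracle, AEAD integrity) that the paper delegates to the hash-object result of Angel et al. That result, or the random-oracle reading of Assumption 1, rather than plain PRF security of HKDF, is what justifies your step that the output is pseudorandom when the random secret enters as input key material under a known chaining key.
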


\begin{IEEEproof}
The chaining key absorbs the shared secrets $K_s$ and $K_e$ behind the ciphertexts $ct_s$ and $ct_e$ of Section~\ref{sec:transport} between the three ECDH secrets. Replacing $K_e$ by a random key changes the view of $\mathcal{A}$ by at most the IND-CCA advantage at $ek_e$. The chaining key has then absorbed a random secret that $\mathcal{A}$ does not hold, so the session keys that it yields are pseudorandom, which gives 1). This step rests on the hash-object result of Angel et al., who prove that the Noise hash chain yields pseudorandom keys once it absorbs one input unknown to the adversary, and their argument covers hybrid patterns~\cite{pqnoise2022}. We do not restate their proof in the flexible ACCE model, so the proposition argues at the level of the chaining key only. The act two tag that $I$ verifies is keyed after $K_s$ has entered the chaining key. A party without $dk_R$ decapsulates an unrelated secret by implicit rejection, so it completes act two only by recovering $K_s$ from $ct_s$ or by forging the tag, which gives 2). Finally, $K_e$ depends only on the per-connection ephemeral pair, so a later static compromise reveals nothing about it, which gives 3).
\end{IEEEproof}

\begin{proposition}[Invoice forgery]
Let a payer $P$ hold an anchor $pk^{*}$ for its payee under Assumption 2, namely $\mathsf{Pin}_P[N]$ or an out-of-band key for BOLT 11 and the key committed in the offer for BOLT 12. Under Assumption 1, $\mathcal{A}$ makes $P$ pay an invoice that the payee did not sign with probability at most $\mathbf{Adv}^{\mathrm{EUF\text{-}CMA}}_{\mathrm{ML\text{-}DSA}}(\mathcal{B})$.
\end{proposition}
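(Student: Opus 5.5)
The plan mirrors the proof of Proposition~1 and reduces to the EUF-CMA security of ML-DSA-44 under the invoice and BOLT~12 contexts. First I would fix the acceptance condition that the pay path of Sections~\ref{sec:invoices} and~\ref{sec:offers} enforces before any HTLC is dispatched. When an anchor $pk^{*}$ exists, the payer requires three things: a PQ signature must be present, any embedded public key must equal $pk^{*}$, and $\mathsf{Vf}(pk^{*}, m, \sigma, \mathsf{ctx})$ must accept. Here $m$ is the serialized invoice. For BOLT~11 it consists of the human-readable part and the data part, including the key fields. For BOLT~12 it is the invoice TLV stream without the signature record. The classical ECDSA or Schnorr signature and the feature bits are controlled by $\mathcal{A}$ through its discrete-logarithm oracle, but they play no role in this decision. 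I would then dispose of the non-forgery cases one by one. A stripped signature is refused as a downgrade. A substituted embedded key differs from $pk^{*}$, which equals the payee's true key by Assumption~2. A chunked signature that has been reordered or truncated either fails to parse or changes $m$ or $\sigma$, so it falls into the remaining case.

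Next comes the reduction. $\mathcal{B}$ receives an ML-DSA-44 challenge key $pk$ and embeds it as the payee's key. In the BOLT~11 case that is $pk_N$. In the BOLT~12 case it is the per-offer key of one guessed offer. $\mathcal{B}$ then simulates the whole Lightning execution for $\mathcal{A}$, including all secp256k1 keys, whose secrets it may simply hand out. Every honest ML-DSA signature under $pk$ is answered through the signing oracle. This covers gossip, invoices, and static invoices, each under its own context string. When $P$ accepts an invoice that the payee never signed, $\mathcal{B}$ outputs $(m, \sigma)$ under the invoice context as its forgery. This pair is fresh, for two reasons. First, $m$ was never queried under that context. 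Second, context strings separate the surfaces, so a gossip signature or a static-invoice signature cannot be replayed as a regular invoice signature. The domain separation therefore carries the "distinct signing domain" claim of Section~\ref{sec:offers}.

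I expect the BOLT~12 anchor to be the main obstacle. The per-offer signing seed is derived as an HMAC of the node's offer key and the nonce, rather than sampled independently. To embed the challenge, I would first make a hybrid step. Under Assumption~1, HMAC is a PRF keyed by a secret that $\mathcal{A}$ cannot reach through Shor. This lets me replace the seeds with independent random ones at a cost of the PRF advantage. I would then guess the targeted offer among the $q$ offers, which loses a factor $q$. The statement's bound omits both terms, so I would either absorb them into "similar running time" or note them explicitly as lower-order terms. A smaller point is that ML-DSA's hedged signing randomness must not be reused across derivations, which the standard already ensures. With these steps in place, the probability that $P$ pays an unsigned invoice is bounded by $\mathbf{Adv}^{\mathrm{EUF\text{-}CMA}}_{\mathrm{ML\text{-}DSA}}(\mathcal{B})$, up to the noted terms.
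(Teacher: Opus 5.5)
Your proposal is correct and follows the paper's proof. Both fix the pay-path acceptance predicate (signature present, any carried key equal to the anchor, $\mathsf{Vf}$ accepting under the anchor), note that the classical signature and feature bits play no role, turn an accepted unsigned invoice into an ML-DSA forgery, and rule out cross-surface and static-invoice replay through the distinct contexts. Your PRF hybrid and factor-$q$ guess for the BOLT~12 per-offer key spell out what the paper compresses into its closing remark that the seed is an HMAC under the offer key, and you are right that a rigorous bound carries those terms, as does the analogous hardened BIP~32 step needed to embed the challenge at $pk_N$ for BOLT~11.
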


\begin{IEEEproof}
The pay paths of Sections~\ref{sec:invoices} and~\ref{sec:offers} refuse an absent, malformed or invalid signature and a carried key that differs from the anchor. The payer therefore dispatches an HTLC only if $\mathsf{Vf}(pk^{*}, m_I, \sigma_I, \mathsf{ctx}) = 1$, where $m_I$ covers the payment hash, the amount, the payment paths and any embedded key. Such a signature on an invoice that the payee never signed is a forgery, and the pairwise distinct contexts prevent a signature from one surface, including a static invoice, from verifying on another. The per-offer key of BOLT 12 also stays hidden from $\mathcal{A}$, since its seed is an HMAC under the node's symmetric offer key.
\end{IEEEproof}

\begin{proposition}[Payment deanonymization]
Consider a hybrid onion over $n$ hops whose ML-KEM keys satisfy Assumption 2. Under Assumption 1, the per-hop keys of every hop are indistinguishable from random for $\mathcal{A}$ even given all per-hop ECDH secrets, with advantage at most $n \cdot \bigl(\mathbf{Adv}^{\mathrm{IND\text{-}CCA}}_{\mathrm{ML\text{-}KEM}}(\mathcal{B}) + q_H / 2^{256}\bigr)$ for $q_H$ random oracle queries.
\end{proposition}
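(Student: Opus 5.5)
The plan is a hybrid argument over the $n$ hops, one game change per hop, each costing one IND-CCA advantage plus one random-oracle term. Game $G_0$ is the real experiment. In it, $\mathcal{A}$ holds the discrete-logarithm oracle and therefore every classical Sphinx secret $s_{c,i}$. It also sees the onion, the fixed 20-slot ciphertext list and any error onions. Game $G_i$ replaces, for hops $1,\dots,i$, the ML-KEM secret $K_i$ behind ciphertext $c_i$ by an independent uniform $K_i'$. Every hop $j\le i$ is simulated with $K_j'$ whenever it processes its own honest $c_j$. In $G_n$ every $K_i$ is uniform and independent of $\mathcal{A}$'s view. By (\ref{eq:fold}) each $s_i^{*}=H(\tau\|s_{c,i}\|K_i')$ is then an output of $H$ on a point that $\mathcal{A}$ queries only if it guesses $K_i'$. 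Over $q_H$ queries this happens with probability at most $q_H/2^{256}$. Otherwise $s_i^{*}$, and every layer key derived from it by HMAC, is uniform. Summing over the $n$ transitions and the $n$ random-oracle events gives the stated bound $n\cdot(\mathbf{Adv}^{\mathrm{IND\text{-}CCA}}_{\mathrm{ML\text{-}KEM}}(\mathcal{B})+q_H/2^{256})$.

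For the step from $G_{i-1}$ to $G_i$ I would build $\mathcal{B}$ against ML-KEM-768 under $ek_{N_i}$. That this is the key the honest hop generated follows from Assumption 2 and the pin. $\mathcal{B}$ receives a challenge $(c^{*},K^{*})$ and places $c^{*}$ in the slot of hop $i$. It computes every other ciphertext and every ECDH value itself, since it can simulate the discrete-logarithm oracle for the classical part. It embeds $K^{*}$ into the fold for hop $i$. If $K^{*}$ is the real secret the view is $G_{i-1}$, and if it is random the view is $G_i$. The distinguishing gap is therefore at most $\mathbf{Adv}^{\mathrm{IND\text{-}CCA}}_{\mathrm{ML\text{-}KEM}}(\mathcal{B})$.

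The main obstacle is answering hop $N_i$'s behaviour when $\mathcal{A}$ injects \emph{other} packets through it. As a forwarding node, $N_i$ decapsulates the front list entry of any HTLC that $\mathcal{A}$ sends and then reveals, through forwarding or failure, whether the hybrid onion MAC verified. This is exactly a decapsulation oracle, which is why IND-CCA rather than IND-CPA is needed. $\mathcal{B}$ answers these queries with its own decapsulation oracle. For a resubmitted $c^{*}$ it uses $K^{*}$ consistently, which the game definition allows because the hop is simulated with $K_i'$ on $c_i$ in $G_i$.

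Two consistency points must be checked. First, tampered, reordered or dropped entries lead, by implicit rejection, to unrelated secrets, so the hop fails closed. Second, a ciphertext that decapsulates to the same $K$ cannot yield a colliding $s^{*}$ unless the ECDH value also matches. Here I would invoke the HON-BIND-K-CT argument already given for (\ref{eq:fold}), applicable because Assumption 2 excludes malicious key pairs, so that omitting $c$ from the hash does not let $\mathcal{A}$ recycle $K^{*}$ under a different packet.

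Finally, I would note that the dummy slots are distributed exactly as real ciphertexts, so they leak nothing about $n$ that the bound needs to account for. The per-surface tag $\tau$ separates these random-oracle queries from the blinded-path and error-onion uses of $H$. The error onion inherits the result because it is keyed from the same $s_i^{*}$.
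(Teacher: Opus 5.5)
Your proposal is correct and follows the paper's proof. The paper runs the same hybrid over the $n$ hops, replacing each $K_i$ by a uniform key at the cost of one IND-CCA advantage and then charging $q_H/2^{256}$ per hop for guessing the random-oracle input in (\ref{eq:fold}). Your treatment of injected packets as decapsulation queries spells out what the paper leaves implicit.

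One side remark needs correcting. The dummy slots are not distributed \emph{exactly} like real ciphertexts, only computationally close to them. That closeness rests on ciphertext pseudorandomness and anonymity, for which the paper cites Maram and Xagawa, not on IND-CCA. This does not affect the stated key-indistinguishability bound, because your reduction generates the list identically in every game.
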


\begin{IEEEproof}
Hop $i$ derives its keys from $s^{*}_i$ of (\ref{eq:fold}) with $(K_i, c_i) \leftarrow \mathsf{Encaps}(ek_i)$. A hybrid argument over the hops replaces each $K_i$ by a random key at the cost of one IND-CCA advantage. Each $s^{*}_i$ is then a random oracle output on an input that $\mathcal{A}$ guesses with probability at most $q_H / 2^{256}$ per hop. From that point, the Sphinx analysis of Danezis and Goldberg~\cite{sphinx2009} applies unchanged to the packet, and it covers blinded paths as well, since the route-blinding cascade applies (\ref{eq:fold}) before it derives the blinded node ids. The ciphertext list carries no key or MAC of its own and its dummies follow the ciphertext distribution, so a single hop learns nothing from it beyond the fixed-size onion as long as ML-KEM ciphertexts are pseudorandom and hide their target key, which Maram and Xagawa proved for Kyber~\cite{maram2023}. However, the list is rotated rather than re-randomized at each hop, so two colluding hops can match its contents and learn that they forward the same payment. Vanilla Lightning already leaks this linkage through the payment hash that every HTLC of a payment carries, so PQLN neither weakens nor improves the privacy of the onion against colluding hops. Finally, a tampered entry changes $K_i$ by implicit rejection and fails the MAC check.
\end{IEEEproof}

\begin{proposition}[Downgrade]
Under Assumptions 1 and 2, $\mathcal{A}$ cannot make an anchored verifier process a gossip message or an invoice classically except by forging ML-DSA, nor make a PQ port complete a classical handshake. Under the two flags of Section~\ref{sec:onion}, a sender never dispatches a classical onion and a node never accepts one.
\end{proposition}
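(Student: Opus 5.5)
The plan is to split the statement into its three surfaces and argue each from the verifier's decision rule rather than from any negotiation, since the discrete-logarithm oracle gives $\mathcal{A}$ control over every classical signature and every feature bit.

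\emph{Gossip and invoices.} Take an anchored verifier, that is, $\mathsf{Pin}_V[N] \neq \bot$ for gossip, or an anchor $pk^{*}$ as in Proposition 3 for invoices. First, I list the cases in which a classical processing path could be reached:
\begin{itemize}
\item all PQ records are stripped;
\item only some PQ records are stripped, leaving a malformed message;
\item a substituted key is carried.
\end{itemize}
By the rules of Sections~\ref{sec:gossip} and~\ref{sec:invoices}, a pinned node with a missing signature or missing key is rejected. A key without a signature, or a signature without a key, is rejected as malformed. A carried key that differs from the anchor is rejected. By Assumption 2 the anchor is the honest key, so the only accepting path left is a valid ML-DSA signature under that key. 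I then invoke Propositions 1 and 3 to bound acceptance of any message the honest signer did not produce by the EUF-CMA advantage. The point to stress is that the decision depends only on the pin and the message, never on feature bits. This mirrors the proof of Proposition 1, so this part should be routine.

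\emph{Transport.} The PQ port has no negotiation: its handshake code always expects $ct_s$ and $ek_e$ in act one and $ct_e$ in act two. A classical act from either side therefore fails to parse at the expected length or fails the tag. An adversary can still send correctly sized acts with garbage ciphertexts. For that case I use Proposition 2, part 2: without $dk_R$, implicit rejection yields an unrelated $K_s$, so the act two tag fails except with the IND-CCA advantage plus the tag-forgery bound. Symmetrically, a responder that is fed a forged act one derives a chaining key that the adversary cannot predict, so the handshake never completes classically.

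\emph{Onion.} This is the hardest part to phrase, because it is a policy statement rather than a cryptographic one. For the sender, I would argue directly from the code path. With \textit{require\_post\_quantum\_payments}, the fallback branch is disabled, so a route lacking a pinned ML-KEM key for some hop is refused rather than sent classically. For the receiver, \textit{require\_post\_quantum\_inbound} fails any HTLC without the ciphertext field. The main obstacle is an adversary that inserts a well-formed but bogus ciphertext list so that the HTLC is accepted even though the sender built a classical onion. To close this, I would argue that a hop decapsulates the front entry and folds the result into its hybrid secret via (\ref{eq:fold}). Under implicit rejection, that hybrid secret does not match the one the sender used in the onion MAC, unless the sender encapsulated it. A classical onion therefore fails the hybrid MAC and is rejected, again with probability loss bounded by the IND-CCA advantage plus $q_H/2^{256}$, as in Proposition 4.
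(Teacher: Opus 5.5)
Your decomposition matches the paper's proof. The signature clause holds because acceptance depends only on the stored anchor and on ML-DSA verification, with Propositions 1 and 3 bounding forgeries. The transport clause holds by the fixed act layout of the PQ port, and the flag clause holds by the two policies. (For invoices, a signature without an embedded key is the normal anchored case, not a malformed one. Only a missing signature or a mismatched key is refused, so your conclusion is unaffected.)

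The problem is that your cryptographic add-ons assert more than is true. On the transport, your responder claim covers only garbage ciphertexts. $\mathcal{A}$ can instead encapsulate honestly to the public $ek_R$, supply its own $ek_e$, and read every ECDH secret off its oracle. It then knows the responder's whole chaining key, and since act three is classical, it can finish the handshake under another node's identity, as Section~\ref{sec:transport} concedes. That is harmless only because the completed handshake is hybrid by layout, which is all the statement asserts.

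The onion step is wrong as stated. The IND-CCA step of Proposition 4 needs an honest encapsulator so that $K_i$ stays hidden, but in your scenario nobody honest encapsulated. $\mathcal{A}$, for instance as the preceding hop, can proceed as follows:
\begin{itemize}
\item peel the entire classical packet with its discrete-logarithm oracle;
\item run $\mathsf{Encaps}(ek_i)$ on the public gossip keys itself;
\item compute every $s^{*}_i$ by (\ref{eq:fold});
\item rebuild the packet with the same payloads under the hybrid keys.
\end{itemize}
Every MAC then verifies and the HTLC is accepted with overwhelming probability. So no bound of the form IND-CCA advantage plus $q_H/2^{256}$ can hold for the event you want to exclude. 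Only a packet left untouched beside an inserted list fails. That follows from (\ref{eq:fold}) and HMAC behaving as random functions, whether or not implicit rejection occurs, not from IND-CCA.

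The proposition still survives, because the re-wrapped HTLC carries a list and is peeled with hybrid keys, so the node has not accepted a classical onion. This is exactly how the paper proves the flag clause: purely as a policy on pinned keys for every hop and on the presence of the list, with no cryptographic bound. You should drop the detour, or recast it as that format property.
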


\begin{IEEEproof}
The accept predicate of every signature surface depends only on the verifier's local anchor and on ML-DSA verification, never on a feature bit. Since $\mathcal{A}$ rewrites messages but not the verifier's storage, a message from an anchored peer is either verified under the anchor or refused, and Propositions 1 and 3 bound the chance that a forged message verifies. The transport offers no negotiation to rewrite, since the hybrid handshake runs on its own port with a fixed act layout. For the onion, the sender flag refuses any route with a hop that lacks a pinned key and the inbound flag fails back any HTLC without a ciphertext list, so a withheld gossip key or a stripped list stops the payment instead of degrading it.
\end{IEEEproof}

\section{Evaluation}
\label{sec:evaluation}

This section describes the implementation and experiment setup of PQLN and presents the evaluation results.

\begin{figure*}[!t]
\centering
\includegraphics[width=0.85\textwidth]{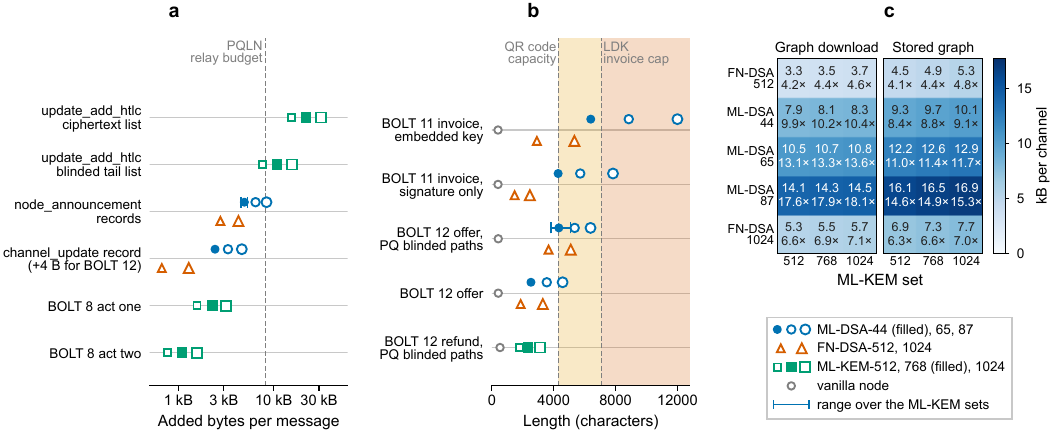}
\caption{Communication overhead of PQLN at every ML-DSA, FN-DSA, and ML-KEM parameter set. (a) Bytes added per message. (b) Payment request lengths. (c) Gossip cost per channel for all 15 pairings, where each cell also gives the multiple of the vanilla cost of 0.8~kB (download) and 1.1~kB (stored graph) per channel.}
\label{fig:paramsets}
\end{figure*}

\subsection{Experiment Setup and Metrics}
\label{sec:setup}

To evaluate the proposed design, we implemented it by modifying the rust-lightning~\cite{ldk} source code at commit 384e0d6 of its main branch (August 2026). The fork adds roughly 11,000 lines of code across 52 files, all gated behind a single Cargo feature named \textit{post-quantum}. ML-DSA-44 and ML-KEM-768 come from the fips204 (v0.4.6) and fips203 (v0.4.3) Rust crates. With the feature disabled, the fork passes the complete upstream test suite. With it enabled, the fork also passes 99 added tests, 36 of which exercise the attacks and refusals analyzed in Section~\ref{sec:security_analysis}. They cover substituted keys, stripped or tampered signatures and records, tampered or dropped ciphertexts and lists, and classical routes or HTLCs under the two flags of Section~\ref{sec:onion}. The fork refuses each attack without altering state or moving funds. A variant of the fork\footnote{\url{https://github.com/ahmet-kurt/pq-rust-lightning/tree/configurable}} makes the ML-DSA and ML-KEM sets selectable at build time and can replace ML-DSA with FN-DSA at degree 512 or 1024 through the fn-dsa (v0.4.0) crate. This variant serves the parameter set comparisons of Sections~\ref{sec:comp_overhead} and~\ref{sec:comm_overhead} and passes the same tests at all 15 pairings.

The network experiments of Sections~\ref{sec:latency}, \ref{sec:comm_overhead} and~\ref{sec:interop} use real Lightning nodes against Bitcoin Core v31.1\footnote{\url{https://bitcoincore.org/en/releases/31.1/}} in regtest mode. The nodes run pq-ldk-sample, our publicly available\footnote{\url{https://github.com/ahmet-kurt/pq-ldk-sample}} adaptation of LDK's reference node ldk-sample~\cite{ldksample} with commands and flags for the PQ functionality. From this source we built a \textit{vanilla} binary that links the unmodified rust-lightning code at our fork's base commit and a \textit{PQ} binary that links our fork with the feature enabled, so any difference between them comes only from PQLN. A Python harness orchestrates the Bitcoin daemon and the nodes on a workstation with a 16-core AMD Ryzen Threadripper PRO 3955WX processor and 64 GB of RAM under Ubuntu 26.04.

To assess PQLN, we use the following metrics: 1) \textit{Computational overhead} which refers to the execution time of the added cryptographic operations; 2) \textit{Payment latency} which refers to the end-to-end delay of a payment measured at the payer; 3) \textit{Communication overhead} which refers to the size increase caused by the added cryptographic material at the message and network scales; 4) \textit{Interoperability} which refers to whether PQ and vanilla nodes can coexist and transact in every combination. The vanilla build serves as the baseline for all comparisons.

\subsection{Computational Overhead Analysis}
\label{sec:comp_overhead}

We first measured the execution times of the cryptographic operations added by PQLN. A timing test in the fork calls the production functions and runs each operation 1000 times on fresh inputs in a release build after a warmup. We repeated it three times with nothing else running and averaged the results. Table~\ref{tab:timings} presents them next to vanilla rust-lightning's classical operations, and the variant supplied the FN-DSA-512 column and the other parameter-set timings below.

\begin{table}[!t]
\caption{Execution Times of the Cryptographic Operations in Microseconds (Mean and Standard Deviation of 1000 Runs)}
\label{tab:timings}
\centering
\scriptsize
\setlength{\tabcolsep}{2pt}
\begin{tabular*}{\columnwidth}{@{\extracolsep{\fill}} l r r r r @{}}
\hline
\multicolumn{5}{@{}l}{\textit{Signatures}} \\
\textbf{Operation} & \textbf{ML-DSA-44} & \textbf{FN-DSA-512} & \textbf{ECDSA} & \textbf{Schnorr} \\
Key generation & 131.4 $\pm$ 5.6 & 2006.8 $\pm$ 576.4 & \multicolumn{2}{c@{}}{19.3 $\pm$ 1.2} \\
Signing & 327.0 $\pm$ 222.5 & 215.6 $\pm$ 5.8 & 25.7 $\pm$ 1.2 & 20.3 $\pm$ 1.1 \\
Verification & 103.4 $\pm$ 6.8 & 13.5 $\pm$ 0.9 & 33.8 $\pm$ 7.1 & 33.5 $\pm$ 1.6 \\
\hline
\multicolumn{5}{@{}l}{\textit{Key exchange}} \\
\textbf{Operation} & \textbf{ML-KEM-768} & \multicolumn{3}{c@{}}{\textbf{ECDH}} \\
Key generation & 51.0 $\pm$ 1.9 & \multicolumn{3}{c@{}}{19.3 $\pm$ 1.2} \\
Encapsulation & 55.3 $\pm$ 1.9 & \multicolumn{3}{c@{}}{34.6 $\pm$ 1.6} \\
Decapsulation & 74.0 $\pm$ 2.6 & \multicolumn{3}{c@{}}{34.6 $\pm$ 1.6} \\
Hybrid secret fold & 0.4 $\pm$ 0.1 & \multicolumn{3}{c@{}}{n/a} \\
\hline
\end{tabular*}
\\[3pt]
\parbox{\columnwidth}{\footnotesize Gossip and BOLT 11 sign with ECDSA and BOLT 12 with Schnorr. All three classical schemes share one secp256k1 key generation, and one ECDH operation serves as both encapsulation and decapsulation.}
\end{table}

As can be seen from Table~\ref{tab:timings}, every operation of the default sets completes well below a millisecond. The most expensive operation is ML-DSA signing with 0.33~ms on average, which is about 13 times slower than ECDSA signing. Its large standard deviation comes from rejection sampling. From these numbers, the hybrid handshake adds around 180~$\mu$s to the initiator and 130~$\mu$s to the responder, and a hybrid onion adds 55~$\mu$s per hop to the sender and 74~$\mu$s to each hop. A BOLT 12 payee spends around 460~$\mu$s per invoice to re-derive its per-offer key and sign. The larger parameter sets stay in the same range, since ML-DSA-87 signing averages 0.64~ms. FN-DSA-512 reverses this profile. It signs and verifies faster, but its key generation takes 2.0~ms with a large variance and 9.8~ms at degree 1024. A node generates its identity key once, so only a BOLT 12 payee pays this cost repeatedly, for every invoice. These costs are small compared to the delay of a payment, which we measure next.

\subsection{Payment Latency Analysis}
\label{sec:latency}

The ciphertext list of Section~\ref{sec:onion} always carries 20 ML-KEM-768 ciphertexts and therefore adds 21.8~kB to every \textit{update\_add\_htlc}, so we measured its effect on the end-to-end delay of a payment. We built chains of one to three hops with vanilla nodes and again with PQ nodes, and the payer at one end recorded the delay from the start of route finding until it received the preimage. Every chain ran on the loopback interface and over a wide-area link emulated with netem, the Linux network emulator. The link has a 50~ms round-trip time and runs at 10~Mbit/s or at 1~Mbit/s, so the nodes pay the propagation delay and the size-dependent transmission time of an Internet path. The payer sent 50 invoice payments of 10,000 sat per configuration, and the payments on the chains of PQ nodes used pinned keys, signed invoices and hybrid onions. Table~\ref{tab:latency} presents the results.

\begin{table}[!t]
\caption{End-to-End Payment Latency in Milliseconds (Mean and Standard Deviation of 50 Payments)}
\label{tab:latency}
\centering
\scriptsize
\setlength{\tabcolsep}{3pt}
\begin{tabular}{c|cc|cc|cc}
\hline
& \multicolumn{2}{c|}{\textbf{Loopback}} & \multicolumn{2}{c|}{\textbf{50 ms RTT, 10 Mbit/s}} & \multicolumn{2}{c}{\textbf{50 ms RTT, 1 Mbit/s}} \\
\textbf{Hops} & \textbf{Vanilla} & \textbf{PQ} & \textbf{Vanilla} & \textbf{PQ} & \textbf{Vanilla} & \textbf{PQ} \\
\hline
1 & \multicolumn{1}{c|}{113 $\pm$ 44} & 98 $\pm$ 38 & \multicolumn{1}{c|}{228 $\pm$ 39} & 252 $\pm$ 49 & \multicolumn{1}{c|}{278 $\pm$ 49} & 438 $\pm$ 40 \\
2 & \multicolumn{1}{c|}{207 $\pm$ 57} & 193 $\pm$ 46 & \multicolumn{1}{c|}{418 $\pm$ 67} & 456 $\pm$ 56 & \multicolumn{1}{c|}{465 $\pm$ 65} & 818 $\pm$ 64 \\
3 & \multicolumn{1}{c|}{297 $\pm$ 54} & 306 $\pm$ 64 & \multicolumn{1}{c|}{608 $\pm$ 91} & 768 $\pm$ 84 & \multicolumn{1}{c|}{651 $\pm$ 71} & 1213 $\pm$ 79 \\
\hline
\end{tabular}
\end{table}

On loopback, a vanilla payment takes around 0.1~s per hop, and PQLN changes this by at most 15~ms. The baseline itself comes from rust-lightning, which holds every incoming HTLC and processes them in batches at random intervals averaging 56~ms. At 10~Mbit/s, PQLN adds 19 to 53~ms per hop. Transmitting the list takes 17~ms at that rate. TCP also sends at most 10 segments after an idle period, and the message spans 16 of them at the link's 1500-byte MTU, so the rest waits one round trip for an acknowledgment. At 1~Mbit/s, transmitting the list alone takes 174~ms, so PQLN adds 160 to 187~ms per hop and a three-hop payment takes 1.21~s instead of 0.65~s.

\subsection{Communication Overhead Analysis}
\label{sec:comm_overhead}

The main cost of PQLN is therefore communication. We measured it with the default parameter sets and repeated every measurement at the other NIST sets and with FN-DSA through the variant. Fig.~\ref{fig:paramsets} presents the results with filled markers for the default sets.

\begin{table*}[!t]
\caption{Interoperability Test Matrix Between Vanilla (V) and PQ (Q) Nodes (All Scenarios Pass)}
\label{tab:interop}
\centering
\scriptsize
\begin{tabular}{c c l}
\hline
\textbf{Scenario} & \textbf{Nodes} & \textbf{Functionality exercised} \\
\hline
S1 & V-V & Vanilla control with public channel, BOLT 11 both directions, keysend, BOLT 12 offer and refund, cooperative close \\
S2 & Q-Q & Same operations with all five protected surfaces, from hybrid handshake and pinning to invoice, onion and offer protections \\
S3 & Q-Q & BOLT 12 offer and refund with PQ blinded paths, covering message paths, payment paths and refund ciphertexts \\
S4 & V-Q & Mixed pair in both directions, where the vanilla node stores PQ gossip and every payment type completes classically \\
S5 & Q-Q-Q & Pin relay across the graph, two-hop hybrid onion with ciphertext rotation at the middle hop, hybrid error onion on a failure \\
S6 & Q-V-Q & Vanilla hop stores but does not relay PQ announcements, so the endpoints complete a classical payment through it \\
S7 & Q-V, Q-Q & Sender running \textit{require\_post\_quantum\_payments} refuses the classical route and still pays its pinned peer \\
S8 & V-Q, Q-Q & Receiver running \textit{require\_post\_quantum\_inbound} fails back the classical HTLC and accepts the hybrid one \\
S9 & V-V-V & Asynchronous payment control with a static invoice server and a held HTLC for an often-offline recipient \\
S10 & Q-Q-Q & ML-DSA-signed static invoice verified against the offer's committed key before the held HTLC is released \\
S11 & V-Q-Q & Vanilla sender on a PQ async offer fails closed before any HTLC is dispatched \\
S12 & Q-V-V & PQ sender pays a vanilla async offer classically \\
\hline
\end{tabular}
\end{table*}

\textit{Message Overhead:} Fig.~\ref{fig:paramsets}(a) shows how many bytes the PQ records add to each wire message, including the TLV record headers. With the default sets, a \textit{node\_announcement} grows by 4928 bytes, a \textit{channel\_update} by 2424 bytes and the two hybrid handshake acts by 2272 and 1088 bytes. The largest overhead comes from the ciphertext list of the payment onion, which adds 21,770 bytes because it always carries 20 entries. The blinded tail list adds another 10,890 bytes when a payment ends in a PQ blinded path. The gossip records grow with the ML-DSA set, and the ML-DSA-87 records exceed the 8192-byte relay budget of Section~\ref{sec:gossip} with ML-KEM-768 and ML-KEM-1024, so the variant doubles the budget for these two pairings. FN-DSA moves in the other direction, since its records add only 2759 bytes to a \textit{node\_announcement} and 670 bytes to a \textit{channel\_update} at degree 512. The ML-KEM set mainly affects the ciphertext lists, and the onion list grows from 15.4~kB with ML-KEM-512 to 31.4~kB with ML-KEM-1024, still below Lightning's 65,535-byte message limit.

\textit{Payment Requests:} Fig.~\ref{fig:paramsets}(b) shows the lengths of the payment requests generated by our nodes, and they vary by a few characters depending on the payment amount. A vanilla BOLT 11 invoice is around 400 characters, whereas it becomes 4286 characters with only the ML-DSA signature and 6396 characters when the public key is also embedded. Similarly, an offer grows from 416 to 2528 characters because of its committed key, and to 4332 characters with PQ blinded paths. A refund keeps its 531 characters because its signature stays classical, and only PQ blinded paths enlarge it to 2323 characters. The payment requests face two hard limits: 1) rust-lightning refuses to parse a BOLT 11 invoice longer than 7089 characters~\cite{ldk}, and 2) a QR code holds at most 4296 alphanumeric characters~\cite{iso18004}. Among the ML-DSA sets, only ML-DSA-44 keeps every payment request below the parser limit, and among the ML-DSA invoices only its signature-only invoice fits a QR code, with 10 characters to spare. Both FN-DSA sets stay below the parser limit, and FN-DSA-512 with ML-KEM-768 fits every request into a QR code with room to spare. ML-DSA-44 is thus the only set of the published standard whose payment requests all remain usable, which supports the choice of Section~\ref{sec:pqln_design}.

\textit{Gossip at Network Scale:} Every joining node downloads and stores the announcements of all public channels and nodes, so we measured its download and its stored graph in real networks of 5 to 100 nodes with 10 to 200 public channels. Each network has a ring topology with additional channels between random node pairs, and we built it with vanilla nodes and again with PQ nodes. After every node had learned the complete graph, a fresh observer node joined over the classical transport, so the two runs differed only in their gossip. We summed the observer's incoming gossip until its graph was complete and read the size of its graph file after it shut down.

Both costs grow linearly with the number of channels, and three repetitions produced identical byte counts because the message sizes are deterministic. Each channel adds 8111 bytes to the PQ download and 9687 bytes to the PQ graph file against 799 and 1102 bytes for vanilla, so a PQLN node downloads 10.2 times and stores 8.8 times the gossip data of a vanilla node. The two \textit{channel\_updates} of every channel dominate this cost, since their signature records account for 4848 of the 8111 bytes. Fig.~\ref{fig:paramsets}(c) shows the same per-channel costs for all 15 pairings. They rise to 14.5~kB with ML-DSA-87 and ML-KEM-1024, or 18.1 times the vanilla download, and fall to 3.3~kB with FN-DSA-512 and ML-KEM-512, or 4.2 times. A PQLN node that joins today's network of 33,000 public channels would therefore download around 270~MB and store a 320~MB graph file with the default sets, against 26~MB and 36~MB for a vanilla node. FN-DSA-512 in place of ML-DSA-44 lowers these figures to around 115~MB and 160~MB. Osuntokun projects a 29-fold growth for converting every gossip signature to ML-DSA-44~\cite{roasbeef2026pq}, whereas PQLN keeps the growth at tenfold because the largest gossip message, the \textit{channel\_announcement}, stays classical at 432 bytes.

\subsection{Interoperability Analysis}
\label{sec:interop}

Table~\ref{tab:interop} presents our 12 interoperability scenarios, which verify that PQLN nodes join the existing network and transact with vanilla nodes in every combination. Each scenario runs real nodes in the listed combination and performs the listed operations, and the node logs confirm whether every payment ran protected or classical.

S1 and S9 are the vanilla controls, and S9 to S12 cover asynchronous payments, where a static invoice server answers for an often-offline recipient and the sender holds its HTLC until the recipient returns. S2, S3, S5 and S10 exercise all five protected surfaces between PQ nodes over one and two hops.

In the mixed scenarios, a vanilla node accepts and stores the PQ announcements but does not relay them, so no pin forms across it and payments that involve it complete classically (S4, S6, S12). With the two flags of Section~\ref{sec:onion}, the sender instead refuses the classical route but still pays its pinned peer (S7), and the receiver fails back the classical HTLC but accepts the hybrid one (S8). S11 fails closed without a flag, because a vanilla sender copies the offer's metadata record into its invoice request as BOLT 12 prescribes and the enlarged request no longer fits into the fixed-size onion of an asynchronous payment, which then aborts before any HTLC is dispatched.

Every combination thus resolves to protected operation between PQ endpoints, classical operation whenever a vanilla node is involved, or a refusal before any funds move, and no payment got stuck. PQLN can therefore be deployed in today's network without a coordinated upgrade.

\section{Discussion and Limitations}
\label{sec:discussion_and_limitations}

The evaluation shows that computation is not the obstacle to PQ protection in Lightning. The cost lies in gossip instead, which grows about tenfold with ML-DSA-44 and about fourfold with FN-DSA-512. Our measurement covers the initial synchronization of a static graph, and the steady-state gossip grows as well, since every \textit{channel\_update} carries a 2424-byte signature record. Only nodes that synchronize the full graph pay this cost. For example, rust-lightning's mobile wallets instead fetch a compressed and signature-free snapshot of the graph from a server through rapid gossip sync. Such a server can add the PQ keys to the snapshot, and its clients would trust it for the keys just as they already trust it for the graph.

PQLN has several limitations. First, its key distribution relies on trust-on-first-use, so a node that first meets a peer after a cryptographically relevant quantum computer exists can pin an adversary's key. PQLN also does not yet rotate pinned keys, although a node needs this after a key compromise or for a move to FN-DSA. Two additions can close these gaps without a consensus change: 1) a \textit{node\_announcement} record signed under the old pinned key can rotate a pin, and 2) a channel's funding key can commit to its owner's PQ keys just as Taproot commits to a script tree, so a late joiner can verify these keys against a funding transaction confirmed before such a computer exists. The on-chain commitment covers only nodes with public channels, and we leave both additions to future work. Second, some signatures stay classical because their verifier holds no trusted key for the signer, namely those on the invoices of an unannounced BOLT 11 payee and on the BOLT 12 \textit{invoice\_request} and its refund response. Third, phantom payments stay classical, since they are a rust-lightning-only feature rather than part of the BOLT specifications. Finally, the TLV types, invoice tags, and feature bits of PQLN are experimental and still need assignment through the BOLT process. Our evaluation covers only rust-lightning nodes, and we defer the tests with other Lightning implementations to future work.

\section{Conclusion}
\label{sec:conclusion}

In this paper, we proposed PQLN, which brings PQ security to the off-chain surfaces of the Lightning Network without changing Bitcoin. PQLN adds ML-DSA and ML-KEM alongside the existing secp256k1 cryptography of gossip, transport, payment requests, and payment onions. It distributes the PQ keys through LN's own gossip with trust-on-first-use pinning and fits the large PQ material into the existing message formats. We analyzed the security of PQLN against a quantum adversary, implemented it in rust-lightning, and evaluated it with real Lightning nodes. The added cryptographic operations take at most 0.33~ms, and the main cost is a tenfold growth of gossip data with ML-DSA-44, or about fourfold with FN-DSA-512. Upgraded and unmodified nodes interoperate in all 12 test scenarios, so PQ protection can begin at Lightning's off-chain layer today, before Bitcoin completes its transition.

\bibliographystyle{IEEEtran}
\bibliography{references}

\end{document}